\theoremstyle{plain}
\newtheorem{assumption}{Assumption}
\newtheorem{lemma}{Lemma}
\newtheorem{thm}{Theorem}
\newtheorem{defi}{Definition}
\newtheorem{coro}{Corollary}
\newcommand{\algrule}[1][.2pt]{\par\vskip.5\baselineskip\hrule height #1\par\vskip.5\baselineskip}
\begin{document}

\title{Over-the-Air Federated Learning in MIMO Cloud-RAN Systems}

\author{Haoming~Ma,
Xiaojun~Yuan,~\IEEEmembership{Senior Member,~IEEE,}
Zhi~Ding,~\IEEEmembership{Fellow,~IEEE,}
	
\thanks{H. Ma and X. Yuan are with the National Key Laboratory of Science and Technology on Communication, the University of Electronic Science and Technology of China, Chengdu, China (e-mail:hmma@std.uestc.edu.cn; xjyuan@uestc.edu.cn). Z. Ding is with the Department of Electrical and Computer Engineering, University of California at Davis, Davis, CA 95616 USA (e-mail:zding@ucdavis.edu). The corresponding author is Xiaojun Yuan.
}
}


\IEEEpubid{0000--0000/00\$00.00~\copyright~2021 IEEE}

\maketitle

\begin{abstract}
To address the limitations of traditional over-the-air federated learning (OA-FL) such as limited server coverage and low resource utilization, we propose an OA-FL in MIMO cloud radio access network (MIMO Cloud-RAN) framework, where edge devices upload (or download) model parameters to the cloud server (CS) through access points (APs). Specifically, in every training round, there are three stages: edge aggregation; global aggregation; and model updating and broadcasting. To better utilize the correlation 
among APs, called inter-AP correlation, we propose modeling the global aggregation stage as a lossy distributed source coding (L-DSC) problem to make analysis from the perspective of rate-distortion theory. We further analyze the performance of the proposed OA-FL in MIMO Cloud-RAN framework. Based on the analysis, we formulate a communication-learning optimization problem to improve the system performance by considering the inter-AP correlation. To solve this problem, we develop an algorithm by using alternating optimization (AO) and majorization-minimization (MM), which effectively improves the FL learning performance. Furthermore, we propose a practical design that demonstrates the utilization of inter-AP correlation. The numerical results show that the proposed practical design effectively leverages inter-AP correlation, and outperforms other baseline schemes.
\end{abstract}

\begin{IEEEkeywords}
Federated learning, cloud radio access network, multiple-input multiple-output multiple access channel, lossy distributed source coding, over-the-air computation.
\end{IEEEkeywords}

\section{Introduction}
With the growing amount of data stored on mobile edge devices, there is an increasing interest in providing artificial intelligence (AI) services, such as computer vision \cite{he_deep_2016} and natural language processing \cite{young_recent_2018}, at the wireless edge. Traditional machine learning (ML) methods require uploading local data to a central node for model training, which unfortunately incurs significant communication costs and raises concerns about data privacy. To address these issues, federated learning (FL) has emerged as a promising framework for distributed and confidential model training \cite{goetz_active_2019}. In the FL framework, each edge device trains its local model using its local data and sends its model update to a cloud server (CS) (referred to as uplink transmission). The CS aggregates the local model updates and updates the global model parameters, and then broadcasts the updated global model to the edge devices (referred to as downlink transmission). Compared to centralized learning, FL significantly reduces the communication burden and the risk of data breaches, and therefore becomes a compelling option for machine learning applications at the wireless edge.

The FL uplink involves transmitting a massive amount of model updates from distributed edge devices to the CS, which poses a critical communication bottleneck due to limited uplink channel resources, such as bandwidth, time, and space\cite{kairouz_advances_2021}. Over-the-air computation has been adopted to improve the communication efficiency of the FL uplink by supporting analog model uploading from massive edge devices \cite{nazer_computation_2007}. Unlike orthogonal resource allocation among devices to avoid interference, over-the-air FL (OA-FL) enables devices to share radio resources during model uploading by leveraging the signal-superposition property of analog transmission for model aggregation over the air. Pioneering work has confirmed that OA-FL has strong noise tolerance \cite{zhu_one-bit_2021} and can reduce latency substantially compared with the FL schemes based on conventional orthogonal multiple access (OMA) protocols. The authors in \cite{wen_reduced-dimension_2019} have further extended over-the-air FL to the multiple input multiple output (MIMO) channel, and leveraged the beamforming technology to enhance the communication quality.

Despite the benefits of OA-FL mentioned above, the limited service coverage of a single CS makes it challenging to access more devices \cite{chen_china_2013}. In general, expanding service coverage by simply replicating CS inevitably results in significant energy consumption, construction investment, site support, leasing, and maintenance costs, making it a costly solution \cite{richter_energy_2009}. Additionally, due to user mobility, traffic on each CS experiences fluctuations, commonly known as the "tide effect." \cite{chen_china_2013}. These CS must be designed to handle maximum traffic rather than average traffic, resulting in a significant waste of processing resources and power during idle periods. 

To address these issues, an effective solution is the MIMO Cloud-RAN architecture. This architecture consists of multiple access points (APs), also known as remote radio heads (RRHs), with each AP serving a specific set of mobile devices. These APs transmit (or receive) signals to (or from) mobile devices through a MIMO-based radio access network and upload (or download) data payloads to (or from) the CS via a fronthaul network \cite{park_fronthaul_2014}. MIMO Cloud-RAN provides flexible network planning and deployment \cite{pompili_dynamic_2015}, resulting in a significant increase in system coverage at a low cost. Furthermore, it enables the full utilization of idle processing resources and reduces the impact of the tide effect by centralizing computing resources into a resource pool of the CS.

In this paper, we introduce the OA-FL into MIMO Cloud-RAN and proposed a OA-FL in MIMO Cloud-RAN framework. More specifically, our proposed framework consists of three stages: edge aggregation, in which each AP aggregates local updates from edge devices through the MIMO-based radio access network \cite{wen_reduced-dimension_2019} over the air to generate edge updates; global aggregation, in which the CS aggregates edge updates from APs through the fronthaul network to generate a global update; and model updating and broadcasting, in which the CS updates global model parameters and broadcasts it to every AP through the fronthaul network, which then broadcasts it to their served devices.

Additionaly, we observed that the local updates in FL are typically correlated \cite{zhong_over--air_2021}, resulting in correlated edge updates. This correlation, referred to as inter-AP correlation, can be leveraged to significantly reduce the communication cost of global aggregation. To better utilize the inter-AP correlation, we propose modeling the global aggregation stage as a lossy distributed source coding (L-DSC) problem to make analysis from the perspective of rate-distortion theory. Based on this analysis, we further analyze the performance of the proposed OA-FL in MIMO Cloud-RAN framework. Furthermore, we formulate a communication-learning optimization problem to improve the system performance by considering the inter-AP correlation. To solve this problem, we develop an algorithm by using alternating optimization (AO) and majorization-minimization (MM), which effectively improves the FL learning performance. 

Subsequently, we propose a practical design that demonstrates the utilization of inter-AP correlation. Our design includes an encoding function and a joint decoding function, which enable efficient compression and reconstruction of the model updates. Specifically, the encoding function consists of a compression module, quantization module, and error accumulation module, which work together to compress the edge update, add an error accumulation term, and quantize the resulting vector to satisfy the rate constraint. On the other hand, the decoding function utilizes a neural network structure to leverage inter-AP correlation for estimating the global update. The numerical results show that the proposed practical design effectively leverages inter-AP correlation, and outperforms other baseline schemes.

The remainder of this paper is structured as follows. In Section \ref{Preliminaries}, we provide an overview of the FL process and MIMO Cloud-RAN. In Section \ref{c-ran}, we present the OA-FL in MIMO Cloud-RAN framework. In Section \ref{analysis}, we make analysis for the L-DSC problem and the proposed OA-FL in MIMO Cloud-RAN framework. In Section \ref{optimization}, we formulate an optimization problem for OA-FL in MIMO Cloud-RAN and develop an effective algorithm to solve it. In Section \ref{coder}, we present a practical design, which exploits the inter-AP correlation. In Section \ref{results}, we validate proposed OA-FL in MIMO Cloud-RAN through experiments.


\emph{Notations}: Throughout, regular small letters, regular capital letters, bold small letters, and bold capital letters denote scalars, random variables, vectors, and matrices, respectively. $(\cdot)^\dagger$, $(\cdot)^\mathsf{T}$ and $(\cdot)^\mathsf{H}$ denote the conjugate, the transpose, and the conjugate transpose, respectively. $\mathbb{R}$ and $\mathbb{C}$ denote the real and complex number sets, respectively. $[k]$ denotes the integer set $\{1,\dots,k\}$. $[\bm{A}]_{k_1,k_2}$ denotes the $k_1$-th row and $k_2$-th column entry of matrix $\bm{A}$. Given the sets $\mathcal{K}_1\subseteq [d_1], \mathcal{K}_2\subseteq [d_2]$, $[\bm{A}]_{\mathcal{K}_1,\mathcal{K}_2}$ denotes the $|\mathcal{K}_1|\times|\mathcal{K}_1|$ matrix obtained by removing all the $(k_1, k_2)$-th elements of $\bm{A}$ with $k_1\in[d_1]\backslash \mathcal{K}_1$ or $k_2\in[d_2]\backslash \mathcal{K}_2$. When $\mathcal{K}_1=\mathcal{K}_2=\mathcal{K}$, $[\mathrm{A}]_{\mathcal{K}_1, \mathcal{K}_2}$ is simplified as $[\mathbf{A}]_{\mathcal{K}}$. $\mathcal{N}\left(\mu, \sigma^2\right)$ denotes normal distribution with mean $\mu$ and variance $\sigma^2$. $\mathcal{CN}(\mu, \sigma^2)$ denotes the circularly-symmetric complex normal distribution with mean $\mu$ and covariance $\sigma^2$. $\mathcal{N}(\boldsymbol{\mu}, \boldsymbol{\Sigma})$ denotes the multivariate normal distribution with mean vector $\boldsymbol{\mu}$ and covariance matrix $\boldsymbol{\Sigma}$. $\|\bm{x}\|$ denotes the $l_2$ norm of the vector $\bm{x}$. $\|\bm{A}\|$ denotes the induced 2-norm of the matrix $\bm{A}$. $\bm{0}$ or $\bm{1}$ denote all-zero or all-one vectors or matrices, respectively. $\bm{I}$ denotes the identity matrix.  $\nabla_{\bm{x}}=\frac{\partial}{\partial \bm{x}}$ denotes the gradient operator with respect to the vector $\bm{x}$. $\mathbb{E}[\cdot]$ denotes the expectation operator.


\section{Preliminaries}\label{Preliminaries}
\subsection{Federated Learning}
We consider a wireless federated learning (FL) system, where $N_D$ edge devices and a cloud server (CS) collaboratively learn a shared model based on the training data distributed over the $N_D$ edge devices. The objective is to minimize a global loss function $\mathcal{L}(\boldsymbol{\theta})$, i.e.,
\begin{equation}\label{system model0}
	\min _{\boldsymbol{\theta}} \mathcal{L}(\boldsymbol{\theta}) \triangleq \sum_{k \in\left[N_D\right]} \mathcal{L}_k(\boldsymbol{\theta}),
\end{equation}
where vector $\boldsymbol{\theta}\in\mathbb{R}^N$ is the global model parameter vector with length $N$, and $\mathcal{L}_k(\boldsymbol{\theta})$ is the empirical loss of device $k$, defined by
\begin{equation}
	\mathcal{L}_k(\boldsymbol{\theta})=\sum_{\boldsymbol{b} \in \mathcal{B}_k} l(\boldsymbol{\theta}, \boldsymbol{b}),
\end{equation}
where $l(\boldsymbol{\theta}, \boldsymbol{b})$ is the sample-wise loss function, $\mathcal{B}_k$ is the local dataset of device $k$, and $\boldsymbol{b}$ is a data sample of $\mathcal{B}_k$. 

The minimization in (\ref{system model0}) is typically solved by gradient descent (GD). Specifically, at each training round $t$, the model parameter $\bm{\theta}$ is updated via
\begin{equation}\label{model updating0}
	\bm{\theta}^{(t+1)} = \bm{\theta}^{(t)}-\eta \nabla_{\bm{\theta}}\mathcal{L}(\boldsymbol{\theta}^{(t)})= \bm{\theta}^{(t)}-\eta \sum_{k\in[N_D]}\boldsymbol{g}_k^{(t)},
\end{equation}
where $\eta\in \mathbb{R}$ is a predetermined learning rate, and $\bm{g}_k^{(t)}\triangleq \nabla_{\bm{\theta}}\mathcal{L}_k(\boldsymbol{\theta}^{(t)})\in \mathbb{R}^N$ is the local update uploaded from each device $k$ to the CS over an uplink channel. After that, $\boldsymbol{\theta}^{(t+1)}$ is broadcast to all the devices by the CS over a downlink channel to synchronize the learning model among the devices. The above process continues until it converges.


\subsection{MIMO Cloud-RAN}\label{overall channel}
In this paper, we consider the deployment of the above FL over a cloud-computing-based architecture for cellular networks, termed MIMO Cloud-RAN \cite{checko_cloud_2015}. This architecture consists of multiple access points (APs), also known as remote radio heads (RRHs). Each AP serves an individual set of mobile devices. These APs transmit (or receive) signals to (or from) mobile devices through a MIMO-based radio access network, and uploads (or downloads) the data payload to (or from) the CS via a fronthaul network, as depicted in Fig. \ref{fig:system}. Details are given below. 

\begin{enumerate}
	\item Uplink radio access network: The uplink radio access network is a MIMO MAC channel, where each AP is equipped with $N_R$ antennas, and each terminal device with  $N_T$ antennas. We assume block-fading, i.e., the channel state information (CSI) keeps invariant during a training round. Let $C$ be the number of uplink channel uses in a training round. Then the received signal matrix at AP $i$ in the $t$-th round is expressed as
	\begin{equation}\label{system:wireless}
		\boldsymbol{Y}_i^{(t)}=\sum_{k \in \mathcal{N}_{D,i}} \boldsymbol{H}_{i k}^{(t)} \boldsymbol{X}_k^{(t)}+\boldsymbol{Z}_i^{(t)}\in\mathbb{R}^{N_R\times C}, \forall i \in\left[N_A\right],
	\end{equation}
	where $\boldsymbol{X}_k^{(t)}\in{\mathbb{C}^{N_T\times C}}$ is the signal matrix transmitted by device $k$, $\boldsymbol{H}_{i k}^{(t)}\in{\mathbb{C}^{N_R\times N_T}}$ is the channel matrix between device $k$ and AP $i$\footnote{The CSI $\{\bm{H}_{ik}\}_{k\in\mathcal{N}_{D,i}},\forall i \in[N_A]$ is estimated at each AP $i$ via channel training \cite{abari_over--air_2016, dong_blind_2020}.}, $\boldsymbol{Z}_i^{(t)}\in{\mathbb{C}^{N_R\times C}}$ is an additive white Gaussian noise (AWGN) matrix whose entries are independently drawn from $\mathcal{CN}(0, {{\sigma_z}^{(t)}}^2)$, and $\mathcal{N}_{D,i}$ is the set of devices served by AP $i$ satisfying $\mathcal{N}_{D,1}\cup\dots\cup\mathcal{N}_{D,N_A}=[N_D]$ and $\mathcal{N}_{D,i}\cap\mathcal{N}_{D,j}=\varnothing,\forall i\ne j$. Note that the sum in (\ref{system:wireless}) implies that signals $\{\boldsymbol{X}_k^{(t)}\}_{k\in\mathcal{N}_{D,i}}$ are transmitted over shared radio resources and are synchronized among devices\footnote{The synchronization can be realized by using the existing techniques, e.g., the timing-advance mechanism for uplink synchronization in 4G Long Term Evolution (LTE) \cite{noauthor_3gpp_2010}.}. This enables the utilization of the superposition characteristic of electromagnetic waves for signal aggregation, a.k.a., over-the-air computation.
	
	\item Uplink fronthaul network: The uplink fronthaul network relies on dedicated fibers for data delivery \cite{park_fronthaul_2014}. Specifically, each AP $i$ transmits its signals to the CS via a rate-constrained wired link.

	\item Downlink fronthaul network: Similarly, in the downlink fronthaul network, the CS broadcasts its signals to every AP via a rate-constrained wired link.

	\item Downlink radio access network: The downlink radio access network associated with each AP $i$ is a MIMO broadcast channel, i.e., the downlink dual of the MIMO MAC in (\ref{system:wireless}). In each training round $t$, each AP $i$ broadcasts its signals to every device in $\mathcal{N}_{D,i}$.

\end{enumerate}

\begin{figure}
	\centering
	\includegraphics[width=0.45\linewidth]{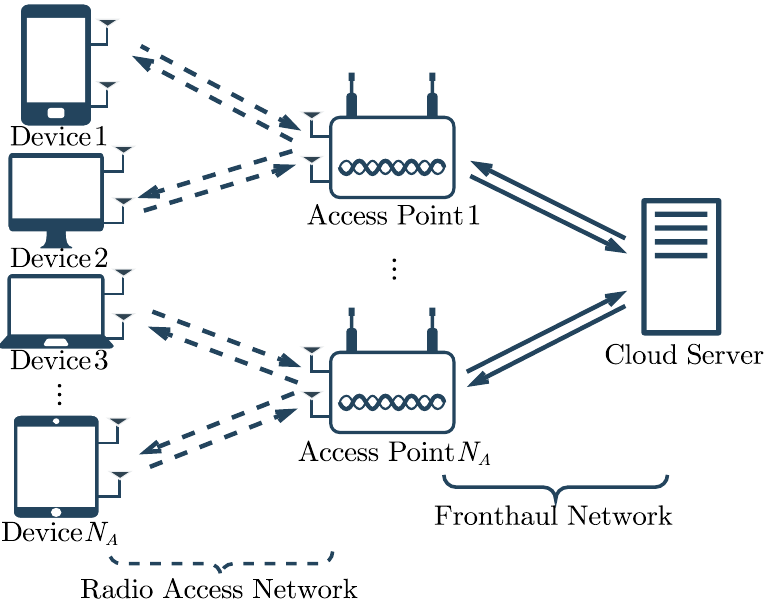}
	\caption{An illustration of the MIMO Cloud-RAN.}
	\label{fig:system}
\end{figure}




\section{OA-FL in MIMO Cloud-RAN}\label{c-ran}
We now describe the FL process deployed in the MIMO Cloud-RAN system. As depicted in Fig. \ref{fig:mysystem}, the OA-FL in MIMO Cloud-RAN framework consists of three stages: (i) edge aggregation; (ii) global aggregation; and (iii) model updating and broadcasting.

\subsection{Edge Aggregation}
We start with the edge aggregation stage of the proposed framework, where each AP aggregates local updates from $N_D$ devices through the uplink radio access network in an over-the-air fashion. Specifically, at the $t$-th round, each device $k$ first normalizes its local update $\boldsymbol{g}_{k}^{(t)}$ by
\begin{equation}\label{device:normalization}
	\tilde{\boldsymbol{g}}_k^{(t)}=\left(\boldsymbol{g}_k^{(t)}-\bar{g}_k^{(t)} \bm{1}\right)\Big/\sqrt{v_k^{(t)}}\in\mathbb{R}^N, \forall k \in\left[N_D\right],
\end{equation}
where $v_k^{(t)}=\frac{1}{N}\left\|\boldsymbol{g}_k^{(t)}-\bar{g}_k^{(t)} \mathbf{1}\right\|^2$ and $\bar{g}_k^{(t)}=\frac{1}{N} \sum_{n \in[N]}\left[\boldsymbol{g}_k^{(t)}\right]_n$\footnote{We assume that the CS perfectly knows $\bar{g}_k^{(t)}$ and $v_k^{(t)}$. In practice, the cost of transmitting the scalars $\{\bar{g}_k^{(t)}, v_k^{(t)}\}_{k\in[N_D]}$ to the CS is negligible as compared to the overall cost.}. To match the complex communication system in (\ref{system:wireless}), each device $k$ maps $\tilde{\boldsymbol{g}}_k$ to a complex vector $\bm{r}_k\in \mathbb{C}^{C}$, defined as
\begin{align}\label{device:complex}
	\operatorname{Re}\{\boldsymbol{r}_k^{(t)}\}\triangleq [\tilde{g}_{k,1}^{(t)},\dots,\tilde{g}_{k,N/2}^{(t)}]^\mathsf{T},\quad \operatorname{Im}\{\boldsymbol{r}_k^{(t)}\}&\triangleq [\tilde{g}_{k,N/2+1}^{(t)},\dots,\tilde{g}_{k,N}^{(t)}]^\mathsf{T},
\end{align}
where $\tilde{g}_{k,m}^{(t)}$ is the $m$-th entry of $\bm{\tilde{g}}_k^{(t)}$, and $C=\frac{N}{2}$. Then each device $k$ transmits $\boldsymbol{r}_k^{(t)}$ with $C$ channel uses, i.e., the transmitted signal matrix of device $k$ is given by
\begin{equation}\label{device:channel input}
	\boldsymbol{X}_k^{(t)}\triangleq\boldsymbol{\alpha}_k^{(t)} {\boldsymbol{r}_k^{(t)}}^\mathsf{T} \in \mathbb{C}^{N_T \times C}, \forall k \in\left[N_D\right],
\end{equation}
where $\boldsymbol{\alpha}_k^{(t)}\in\mathbb{C}^{N_T}$ is transmit beamforming vector of device $k$ with $N_T$ being the number of transmit antennas. Let $r_{k,c}^{(t)}$ be the $c$-th entry of $\boldsymbol{r}_k^{(t)}$ and $\bm{x}_{k,c}^{(t)}\in\mathbb{C}^{N_T}$ be the $c$-th col of $\boldsymbol{X}_k^{(t)}$. The transmitted signal of the device $k$ at the $c$-th channel use is given by $\bm{x}_{k,c}^{(t)}=r_{k,c}^{(t)}\bm{\alpha}_k^{(t)}$, satisfying the power constraint of
\begin{equation}\label{device:power constraint}
	\mathbb{E}\left[\left\|\boldsymbol{x}_{k,c}^{(t)}\right\|^2\right]=2\left\|\boldsymbol{\alpha}_k^{(t)}\right\|^2 \leq P_k^{(t)}, \forall k \in\left[N_D\right], \forall n \in[N],
\end{equation}
where the equality is from (\ref{device:normalization}) (implying $\mathbb{E}[|{r}_{k,c}^{(t)}|^2]=2$), and $P_k$ is the power budget of device $k$. After transmitting the signal matrices $\{\boldsymbol{X}_k^{(t)}\}_{k\in\mathcal{N}_{D,i}}$ to AP $i$ via the uplink radio access network in (\ref{system:wireless}), AP $i$ applies a receive beamforming vector to combine the received signal matrix $\bm{Y}_i^{(t)}$ over the air, resulting in $\hat{\boldsymbol{r}}_i^{(t)}$ as
\begin{equation}\label{acc:receive}
	\hat{\boldsymbol{r}}_i^{(t)}=\left({\boldsymbol{\beta}_i^{(t)}}^{\mathrm{H}} \boldsymbol{Y}_i^{(t)}\right)^\mathsf{T}=\sum_{k \in \mathcal{N}_{D,i}} \boldsymbol{r}_k^{(t)}\left({\boldsymbol{\beta}_i^{(t)}}^{\mathrm{H}} \boldsymbol{H}^{(t)}_{i k} \boldsymbol{\alpha}_k^{(t)}\right)^\mathsf{T}+{\boldsymbol{Z}^{(t)}_i}^\mathsf{T} {\boldsymbol{\beta}_i^{(t)}}^{\dagger} \in \mathbb{C}^{C}, \forall i \in\left[N_A\right],
\end{equation}  
where $\boldsymbol{\beta}_i^{(t)}\in\mathbb{C}^{N_R}$ is the receive beamforming vector of AP $i$. Then each AP $i$ constructs its edge update as
\begin{equation}\label{acc:process}
	{\boldsymbol{s}}_i^{(t)}\triangleq\left[\begin{array}{l}
		\operatorname{Re}\left\{\hat{\boldsymbol{r}}_i^{(t)}\right\}^\mathsf{T} ,
		\operatorname{Im}\left\{\hat{\boldsymbol{r}}_i^{(t)}\right\}^\mathsf{T}
	\end{array}\right]^\mathsf{T} \in \mathbb{R}^N, \forall i \in\left[N_A\right].
\end{equation}

\begin{figure}
	\centering
	\includegraphics[width=0.95\linewidth]{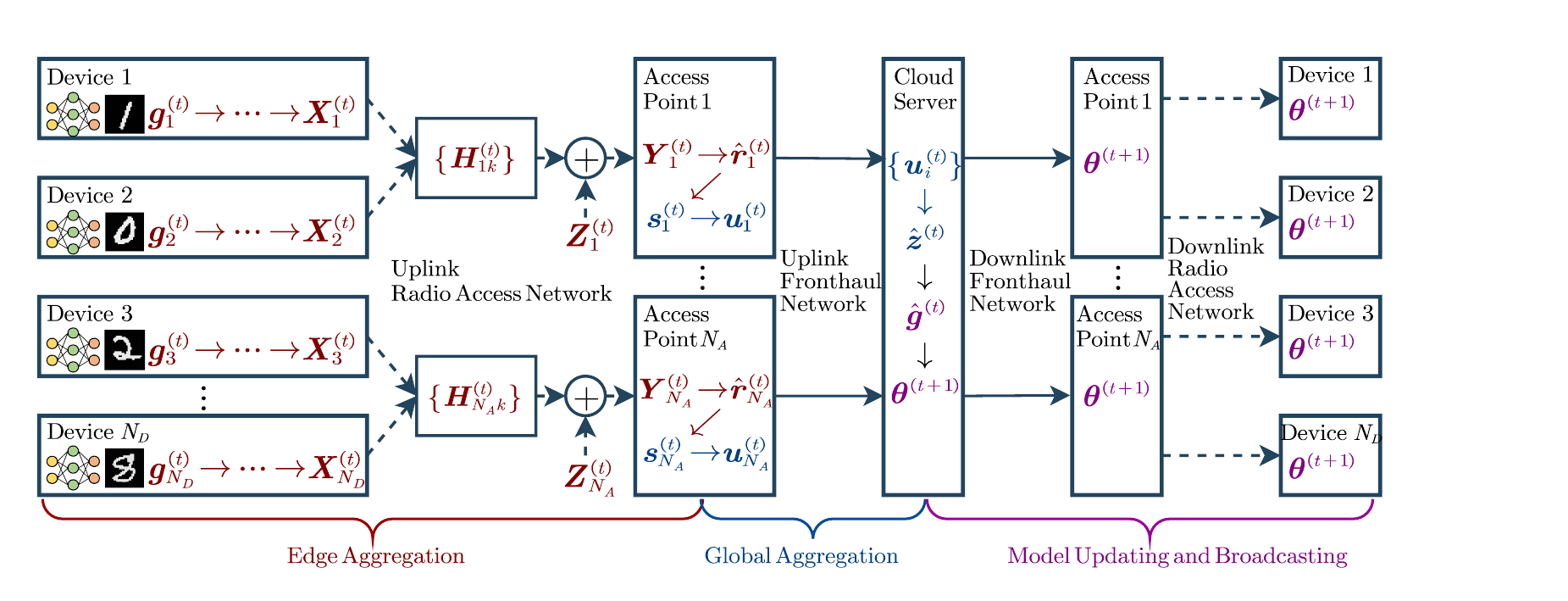}
	\caption{An illustration of a training round of MIMO Cloud-RAN based OA-FL.}
	\label{fig:mysystem}
\end{figure}



\subsection{Global Aggregation}\label{global agg}
We now describe the global aggregation stage of our proposed framework, where the CS aggregates edge updates from $N_A$ APs through the uplink fronthaul network. Specifically, at the $t$-th round, each AP $i$ encodes $\bm{s}_i^{(t)}$ to a vector
\begin{equation}\label{encoder}
	\bm{u}_i^{(t)}=f_i^{(t)}\left(\bm{s}_i^{(t)}\right)\in[B_i^{(t)}],\forall i\in[N_A],
\end{equation}
where $f_i^{(t)}:\mathbb{R}^N \rightarrow [B_i^{(t)}], \forall i\in[N_A]$ is an encoding function, and $B_i^{(t)}$ is the codebook size of $f_i^{(t)}$. Then each AP transmits $\bm{u}_i^{(t)}$ to the CS over the uplink fronthaul network. Upon receiving the signals $\{\bm{u}_i^{(t)}\}_{i\in[N_A]}$, the CS aggregates (or alternatively, decodes) them as a global update.
\begin{equation}\label{decoder}
	\hat{\bm{z}}^{(t)}=f^{(t)}\left(\bm{u}_1^{(t)},\dots,\bm{u}_{N_A}^{(t)}\right) \in\mathbb{R}^{N},
\end{equation}
where $f^{(t)}:[B_i^{(t)}]\times \dots \times [B_{N_A}^{(t)}] \rightarrow \mathbb{R}^N$ is a decoding function. Local updates in FL are typically correlated \cite{zhong_over--air_2021}, resulting in correlated edge updates. This correlation, called inter-AP correlation, can be exploited to significantly reduce the communication cost of global aggregation. To improve communication efficiency, we treat global aggregation as a lossy distributed source coding (L-DSC) process. In L-DSC, given an aggregation target function
$
	\kappa:\underbrace{\mathbb{R}^N\times \dots \times \mathbb{R}^N}_{N_A} \rightarrow\mathbb{R}^N
$
 and a distortion measure function
$
	d : \mathbb{R}^N \times \mathbb{R}^N \rightarrow \mathbb{R}_{+},
$ an achievable rate-distortion tuple is defined below.
\begin{defi}\label{defi1}
	A rate-distortion tuple $\left(R_1^{(t)}, \ldots, R_{N_A}^{(t)}, D^{(t)}\right), \forall t$ is said to be achievable if for any $\epsilon>0$ and any sufficiently large $N$, there exists $N_A$ encoding functions $\left\{f_i^{(t)}\right\}_{i=1}^{N_A}$ and a joint decoding function $f^{(t)}$ such that rate $\frac{1}{N} \log \left(B_i^{(t)}\right) \leq R_i^{(t)}+\epsilon, \forall i \in[N_A]$, and expected aggregation distortion $\frac{1}{N}\mathbb{E}\left[d\left(\mathbf{z}^{(t)}, \hat{\mathbf{z}}^{(t)}\right)\right] \leq D^{(t)}+\epsilon$, where $\bm{z}^{(t)}\triangleq\kappa\left(\bm{s}_1^{(t)},\dots,\bm{s}_{N_A}^{(t)}\right)$.
\end{defi}
Loosely speaking, as the model dimension $N$ increases, the expected distortion $\frac{1}{N}\mathbb{E}\left[d\left(\mathbf{z}^{(t)}, \hat{\mathbf{z}}^{(t)}\right)\right]$ can be arbitrarily close to $D^{(t)}$, when the rate $\frac{1}{N}\log(B_i^{(t)})$ tends to $R_i^{(t)},\forall i\in[N_A]$. The rate-distortion region of L-DSC is defined as the set of all achievable rate-distortion tuples, denoted by $\mathcal{RD}^{(t)}$. Detailed analysis of L-DSC will be provided later in Section \ref{analysis L-DSC}.







\begin{algorithm}
	\caption{OA-FL in MIMO Cloud-RAN.
	}\label{algo:system}
	\begin{algorithmic}[1] 
		\STATE \textbf{At each training round $t$}
		\STATE \textbf{Edge aggregation:}
		\STATE \quad Each device $k$ computes its local update $\bm{g}_k^{(t)}$ with $\mathcal{B}_k$ and $\bm{\theta}$
		\STATE \quad Each device $k$ computes $\bm{X}_k^{(t)}$ from $\bm{g}_k^{(t)}$ via (\ref{device:normalization})-(\ref{device:channel input})
		\STATE \quad Each device $k$ transmits $\bm{X}_k^{(t)}$ to the APs via the channel in (\ref{system:wireless}) 
		\STATE \quad Each AP $i$ receives the matrix $\bm{Y}_i^{(t)}$ and computes $\bm{s}_i^{(t)}$ via (\ref{acc:receive})-(\ref{acc:process})
		
		\STATE \textbf{Global aggregation:}
		\STATE \quad Each AP $i$ computes $\bm{u}_i^{(t)}$ from $\bm{s}_i^{(t)}$ via (\ref{encoder})
		\STATE \quad Each AP $i$ transmits $\bm{u}_i^{(t)}$ to the CS
		\STATE \quad The CS receives $\{\bm{u}_i^{(t)}\}_{i=1}^{N_A}$ and computes $\hat{\bm{z}}^{(t)}$ via (\ref{decoder})

		\STATE \textbf{Model updating and broadcast:}
		\STATE \quad The CS computes $\hat{\boldsymbol{g}}^{(t)}$ from $\hat{\bm{z}}^{(t)}$ via (\ref{cs:gradient estimation}) and updates $\bm{\theta}^{(t)}$ via (\ref{cs:model update})
		\STATE \quad The CS broadcasts $\bm{\theta}^{(t+1)}$ to all the APs
		\STATE \quad Each AP $i$ broadcasts the received $\bm{\theta}^{(t+1)}$ to every device in $\mathcal{N}_{D,i}$
	\end{algorithmic} 	
\end{algorithm}

\subsection{Model Updating and Broadcasting}
We now describe the model updating and broadcasting stage of our proposed framework. Specifically, the CS computes the vector $\hat{\bm{g}}^{(t)}\in \mathbb{R}^N$ as
\begin{equation}\label{cs:gradient estimation}
	\hat{\boldsymbol{g}}^{(t)}= \hat{\boldsymbol{z}}^{(t)}+\sum_{k \in\left[N_D\right]} \bar{g}_k^{(t)} \mathbf{1}\in\mathbb{R}^N,
\end{equation}
where the bias term $\sum_{k\in[N_D]}\bar{g}_k^{(t)}$ given in (\ref{device:normalization}) is known at the CS through error-free transmission. Then the CS updates the global model as
\begin{equation}\label{cs:model update}
	\boldsymbol{\theta}^{(t+1)} = \boldsymbol{\theta}^{(t)}-\eta \hat{\boldsymbol{g}}^{(t)},
\end{equation}
where $\eta$ is given in (\ref{model updating0}). Then the CS broadcasts the global model $\boldsymbol{\theta}^{(t+1)}$ to every AP via the downlink fronthaul network described in Section \ref{overall channel}. Next each AP $i$ broadcasts received $\boldsymbol{\theta}^{(t+1)}$ to every device in $\mathcal{N}_{D,i}$. Following the common practice in \cite{liu_reconfigurable_2021,amiri_federated_2020,zhong_over--air_2021,yu_optimizing_2020,yang_federated_2022}, we assume that the broadcast process in the downlink fronthaul network and the downlink radio access network is error-free. The overall scheme of OA-FL in MIMO Cloud-RAN is summarized in Algorithm \ref{algo:system}.

\section{Performance Analysis}\label{analysis}
In this section, we first analyze the L-DSC problem described in Section \ref{global agg}, and then carry out the performance analysis of the MIMO Cloud-RAN based OA-FL, yielding an upper bound on the learning performance.

\subsection{Preliminaries}\label{analysis L-DSC}


Recall that we model global aggregation as an L-DSC process. We define the rate-distortion region $\mathcal{RD}^{(t)}$ as the set of all achievable rate-distortion tuples in L-DSC at the $t$-th training round. However, this region is generally unknown. To establish a tractable inner region of $\mathcal{RD}^{(t)}$, we rely on the following assumptions.

\begin{assumption}\label{gaussian ass}
	Denote a matrix $\bm{\Sigma}_S^{(t)}\in\mathbb{R}^{N_A\times N_A}$ with $[\bm{\Sigma}_S^{(t)}]_{i,j}=\frac{1}{N}\mathbb{E}\left[(\bm{s}_i^{(t)})^\mathsf{T}\bm{s}_j^{(t)}\right],\forall i,j\in[N_A]$. Denote the $n$-th entry of $\bm{s}_i^{(t)}$ as $s_{i,n}^{(t)}$. For the edge updates $\{\bm{s}_i^{(t)}\}_{i\in[N_A]}$, we assume that (i) $s_{i,n}^{(t)}\sim\mathcal{N}(0,[\bm{\Sigma}_S^{(t)}]_{i,i}), \forall i\in[N_A],  \forall n\in[N]$, and (ii) $(s_{1,n}^{(t)}$, $s_{2,n}^{(t)}$, $\dots$, $s_{N_A,n}^{(t)})\sim\mathcal{N}(\bm{0},\bm{\Sigma}_S^{(t)}), \forall   n\in[N]$.
\end{assumption}
Assumption \ref{gaussian ass} allows us to treat $\{s_{1,n}^{(t)}$, $\dots$, $s_{N_A,n}^{(t)}\}_{n\in[N]}$ as $N$ samples generated from an $N_A$-component memoryless Gaussian source $\mathcal{N}(\bm{0},\bm{\Sigma}_S^{(t)})$\footnote{The matrix $\bm{\Sigma}_S$ models correlations among APs, and is either known a priori or estimated during training. A simple approach is for each AP to send a sub-vector of $\bm{s}_i$ to the CS intermittently for parameter estimation during training.}. Based on Assumption \ref{gaussian ass}, we provide an inner region of $\mathcal{RD}^{(t)}$ as follows:
\begin{lemma}\label{RDin}
	Under the quadratic distortion measure $d(\mathbf{a}^{(t)}, \mathbf{b}^{(t)})=\|\mathbf{a}^{(t)}-\mathbf{b}^{(t)}\|^2$ and the linear aggregation target $\kappa\left(\bm{a}_1^{(t)}, \ldots, \bm{a}_{N_A}^{(t)}\right)=\sum_{i=1}^{N_A} c_i^{(t)} \bm{a}_i^{(t)}$, where $c_i^{(t)}$ is the $i$-th entry of a weighted coefficient vector $\bm{c}^{(t)}\in\mathbb{R}^{N_A}$, it is concluded from Assumption \ref{gaussian ass} that
\begin{align}
	\nonumber\mathcal{R} \mathcal{D}_{\text{in}}^{(t)}&=\bigcup_{\mathbf{\Sigma}_V^{(t)}}\bigg\{\left(R_1^{(t)}, \ldots, R_{N_A}^{(t)}, D^{(t)}\right): \\
	\nonumber& (i) \sum_{i \in \mathcal{K}} R_i^{(t)} \geq \frac{1}{2} \log \left(\frac{\operatorname{det}\left(\bm{\Sigma}_S^{(t)}+\bm{\Sigma}_V^{(t)}\right)}{\operatorname{det}\left([\bm{\Sigma}_S^{(t)}]_{\mathcal{K}^c}+[\bm{\Sigma}_V^{(t)}]_{\mathcal{K}^c}\right) \operatorname{det}\left([\bm{\Sigma}_V^{(t)}]_{\mathcal{K}}\right)}\right), \forall \mathcal{K} \subset[N_A], \mathcal{K} \neq \varnothing; \\
	\nonumber& (ii) \sum_{i \in[N_A]} R_i^{(t)} \geq \frac{1}{2} \log \left(\frac{\operatorname{det}\left(\bm{\Sigma}_S^{(t)}+\bm{\Sigma}_V^{(t)}\right)}{\operatorname{det}\left(\bm{\Sigma}_V^{(t)}\right)}\right);\\
	& (iii) D^{(t)} \geq \frac{1}{N}\mathbb{E}[\|\bm{z}^{(t)}-\hat{\bm{z}}^{(t)}\|^2]\bigg\},
\end{align}
where $\mathcal{R}\mathcal{D}_{\text{in}}^{(t)}\subset\mathcal{R}\mathcal{D}^{(t)}$, rate $R_i^{(t)},\forall i$ and distortion $D^{(t)}$ are given in Definition \ref{defi1}, $\hat{\bm{z}}$ is given in (\ref{decoder}), $\bm{z}^{(t)}\triangleq\kappa\left(\bm{s}_1^{(t)},\dots,\bm{s}_{N_A}^{(t)}\right)$, $\boldsymbol{\Sigma}_V^{(t)}\in\mathbb{R}^{N_A\times N_A}$ is an undetermined diagonal matrix called L-DSC parameter, and $\frac{1}{N}\mathbb{E}[\|\bm{z}^{(t)}-\hat{\bm{z}}^{(t)}\|^2]\triangleq{\mathbf{c}^{(t)}}^{\mathsf{T}} \boldsymbol{\Sigma}_S^{(t)} \mathbf{c}^{(t)}-{\mathbf{c}^{(t)}}^{\mathsf{T}} \boldsymbol{\Sigma}_S^{(t)}\left(\boldsymbol{\Sigma}_S^{(t)}+\boldsymbol{\Sigma}_V^{(t)}\right)^{-1} {\boldsymbol{\Sigma}_S^{(t)}}^{\mathsf{T}} \mathbf{c}^{(t)}$. 
\end{lemma}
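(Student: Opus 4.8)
The plan is to realize $\mathcal{RD}_{\text{in}}^{(t)}$ as an instance of the Berger--Tung inner bound for distributed lossy source coding, specialized to a Gaussian source with a Gaussian additive test channel. Under Assumption \ref{gaussian ass}, the per-coordinate vectors $\mathbf{S}_n\triangleq(s_{1,n}^{(t)},\dots,s_{N_A,n}^{(t)})^\mathsf{T}$ are i.i.d.\ across $n\in[N]$ with $\mathbf{S}_n\sim\mathcal{N}(\bm{0},\bm{\Sigma}_S^{(t)})$, so the $N_A$-encoder, single-decoder problem of Definition \ref{defi1} admits a single-letter characterization; I suppress $n$ and $(t)$ and write $\mathbf{S}=(S_1,\dots,S_{N_A})^\mathsf{T}$ for a generic source letter. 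For any fixed diagonal $\bm{\Sigma}_V^{(t)}$, I would introduce auxiliary variables $U_i=S_i+V_i$ with $V_i\sim\mathcal{N}(0,[\bm{\Sigma}_V^{(t)}]_{i,i})$ mutually independent and independent of $\mathbf{S}$. The diagonal structure of $\bm{\Sigma}_V^{(t)}$ reflects that each AP quantizes its own edge update separately, and it guarantees the Markov condition $U_i-S_i-(S_j,U_j)_{j\ne i}$ required by Berger--Tung; moreover $\mathbf{U}=(U_1,\dots,U_{N_A})^\mathsf{T}\sim\mathcal{N}(\bm{0},\bm{\Sigma}_S^{(t)}+\bm{\Sigma}_V^{(t)})$.

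Next I would evaluate the Berger--Tung rate constraints $\sum_{i\in\mathcal{K}}R_i^{(t)}\ge I(S_{\mathcal{K}};U_{\mathcal{K}}\mid U_{\mathcal{K}^c})$ for every nonempty $\mathcal{K}\subset[N_A]$, where $S_{\mathcal{K}}\triangleq(S_i)_{i\in\mathcal{K}}$ and similarly for $U_{\mathcal{K}}$. Decomposing $I(S_{\mathcal{K}};U_{\mathcal{K}}\mid U_{\mathcal{K}^c})=h(U_{\mathcal{K}}\mid U_{\mathcal{K}^c})-h(U_{\mathcal{K}}\mid S_{\mathcal{K}},U_{\mathcal{K}^c})$ and applying the Gaussian differential-entropy formula, the first term equals $\tfrac12\log\big(\det(\bm{\Sigma}_S^{(t)}+\bm{\Sigma}_V^{(t)})/\det([\bm{\Sigma}_S^{(t)}]_{\mathcal{K}^c}+[\bm{\Sigma}_V^{(t)}]_{\mathcal{K}^c})\big)$ after using $[\bm{\Sigma}_S^{(t)}+\bm{\Sigma}_V^{(t)}]_{\mathcal{K}^c}=[\bm{\Sigma}_S^{(t)}]_{\mathcal{K}^c}+[\bm{\Sigma}_V^{(t)}]_{\mathcal{K}^c}$, while the second collapses to $h(V_{\mathcal{K}})=\tfrac12\log\det([\bm{\Sigma}_V^{(t)}]_{\mathcal{K}})$ because conditioning on $S_{\mathcal{K}}$ leaves only the independent noise $V_{\mathcal{K}}$ in $U_{\mathcal{K}}$. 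Subtracting yields exactly constraint (i); taking $\mathcal{K}=[N_A]$, so that $\mathcal{K}^c=\varnothing$ and the vacuous submatrix determinant equals one, gives the sum-rate constraint (ii).

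For the distortion I would fix the decoder to the linear MMSE estimator of $\bm{z}^{(t)}$ from $\mathbf{U}$, namely $\hat{\bm{z}}^{(t)}={\mathbf{c}^{(t)}}^{\mathsf{T}}\bm{\Sigma}_S^{(t)}(\bm{\Sigma}_S^{(t)}+\bm{\Sigma}_V^{(t)})^{-1}\mathbf{U}$ applied coordinatewise, recalling $\bm{z}^{(t)}={\mathbf{c}^{(t)}}^{\mathsf{T}}\mathbf{S}$. The per-letter mean-squared error is then ${\mathbf{c}^{(t)}}^{\mathsf{T}}$ times the Gaussian MMSE error covariance $\bm{\Sigma}_S^{(t)}-\bm{\Sigma}_S^{(t)}(\bm{\Sigma}_S^{(t)}+\bm{\Sigma}_V^{(t)})^{-1}\bm{\Sigma}_S^{(t)}$ acting on $\mathbf{c}^{(t)}$, which reproduces the stated expression for $\tfrac1N\mathbb{E}[\|\bm{z}^{(t)}-\hat{\bm{z}}^{(t)}\|^2]$ and hence constraint (iii). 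Since this rate-distortion tuple is achievable for every admissible diagonal $\bm{\Sigma}_V^{(t)}$, taking the union over all such matrices produces a region contained in $\mathcal{RD}^{(t)}$, giving $\mathcal{RD}_{\text{in}}^{(t)}\subseteq\mathcal{RD}^{(t)}$.

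I expect the main obstacle to lie in the conditional-entropy bookkeeping for constraint (i): one must partition $\mathbf{U}$ into its $\mathcal{K}$ and $\mathcal{K}^c$ blocks, justify that a principal submatrix of a sum equals the sum of the principal submatrices, and verify that conditioning on $S_{\mathcal{K}}$ alone reduces $h(U_{\mathcal{K}}\mid S_{\mathcal{K}},U_{\mathcal{K}^c})$ to the entropy of the noise block $V_{\mathcal{K}}$. The reduction of the vector coding problem to the single-letter Gaussian characterization via the i.i.d.\ assumption also warrants care, whereas the remaining MMSE and determinant manipulations are routine.
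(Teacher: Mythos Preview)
Your proposal is correct. The paper itself does not give a self-contained proof of this lemma; it simply cites \cite{yang_federated_2022}. Your sketch via the Berger--Tung inner bound with Gaussian additive test channels $U_i=S_i+V_i$ is precisely the standard route to such a result and is almost certainly what the cited reference contains: the rate constraints come from evaluating $I(S_{\mathcal{K}};U_{\mathcal{K}}\mid U_{\mathcal{K}^c})$ with Gaussian entropies, and the distortion constraint from the linear MMSE estimator of $\mathbf{c}^{\mathsf{T}}\mathbf{S}$ given $\mathbf{U}$. The conditional-entropy bookkeeping you flag is indeed the only place requiring care, and your handling of it (principal submatrices commute with sums; conditioning on $S_{\mathcal{K}}$ leaves only the independent noise $V_{\mathcal{K}}$) is sound.
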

\begin{proof}
	See the proof in \cite{yang_federated_2022}.
\end{proof}
We emphasize that the choice of L-DSC parameter $\boldsymbol{\Sigma}_V^{(t)}$ leads to different codebooks for L-DSC. By tuning the L-DSC parameter $\boldsymbol{\Sigma}_V^{(t)}$, we search for the rate-distortion tuple in $\mathcal{R} \mathcal{D}_{\text{in}}^{(t)}$ that minimizes the distortion $\mathbb{E}[\|\bm{z}^{(t)}-\bm{\hat{z}}^{(t)}\|^2]$. However, due to the link rate-constraints, $\boldsymbol{\Sigma}_V^{(t)}$ cannot be chosen arbitrarily. Let $r_i^{(t)}$ denote the maximum information rate that AP $i$ can transmit to the CS at the training round $t$. To ensure reliable uplink transmission, the (source coding) rates in Lemma \ref{RDin} must satisfy
\begin{equation}\label{bit con}
	R_i^{(t)}\leq r_i^{(t)},\forall i\in[N_A].
\end{equation}

\subsection{Communication Error Analysis of OA-FL in MIMO Cloud-RAN}
We now analyze the communication error in the OA-FL in MIMO Cloud-RAN framework. The gradient aggregation error at the $t$-th training round is given by
\begin{equation}\label{ana:error}
	\bm{e}^{(t)}=\nabla \mathcal{L}(\bm{\theta}^{(t)})-
	\hat{\boldsymbol{g}}^{(t)}\in\mathbb{R}^{N},
\end{equation}
where $\nabla \mathcal{L}(\bm{\theta}^{(t)})$ is from (\ref{model updating0}) and $\hat{\boldsymbol{g}}^{(t)}$ is from (\ref{cs:model update}). From (\ref{device:normalization})-(\ref{cs:gradient estimation}), we see that the error $\bm{e}$ mainly arises from the transmission in the uplink radio access network, and the transmission in the uplink fronthaul network. Correspondingly, we decompose the error $\bm{e}$ into two terms as
\begin{subequations}\label{error analysis}
	\begin{align}
		\boldsymbol{e}^{(t)} &=\sum_{k \in\left[N_D\right]}\bm{g}_k^{(t)}-\sum_{k \in\left[N_D\right]} \bar{g}_k^{(t)} \mathbf{1}-\hat{\boldsymbol{z}}^{(t)}\\
		&=\underbrace{\sum_{k \in\left[N_D\right]} \sqrt{v_k^{(t)}} \tilde{\boldsymbol{g}}_k^{(t)}-\sum_{i \in\left[N_A\right]}[\boldsymbol{c}^{(t)}]_i \bm{s}_i^{(t)}}_{\text{Wireless transmission error }}+\underbrace{\bm{z}^{(t)}-\hat{\boldsymbol{z}}^{(t)}}_{\text{Wired transmission error}} \\
		&=\boldsymbol{e}_1^{(t)}+\boldsymbol{e}_2^{(t)}\in\mathbb{R}^N,
	\end{align}
\end{subequations}
where the first equality is from (\ref{cs:gradient estimation}), the second equality is from (\ref{device:normalization}), $\bm{e}_{1}^{(t)}\in\mathbb{R}^N$ is the wireless transmission error caused by the communication noise in (\ref{system:wireless}), and $\bm{e}_{2}^{(t)}\in\mathbb{R}^N$ is the wired transmission error caused by the rate-constraint in the fronthaul network. We analyze the bound of $\mathbb{E}[||\boldsymbol{e}^{(t)}||^2]$ in the following lemma.
\begin{lemma}\label{all_bound}
	Assume $\frac{1}{N}\mathbb{E}[\|(\tilde{\boldsymbol{g}}_{k_1}^{(t)})^{\mathsf{H}}{\tilde{\boldsymbol{g}}_{k_2}^{(t)}}\|^2] = [\bm{\Sigma_S}^{(t)}]_{i_1,i_2}, \forall k_1\in\mathcal{N}_{D,i_1}, k_2\in\mathcal{N}_{D,i_2}$. Then
	\begin{align}\label{error_term}
		\nonumber\mathbb{E}[||\boldsymbol{e}^{(t)}||^2]\leq 2ND_{\text{system}}^{(t)},
	\end{align}
	where
	\begin{align}
		\nonumber D_{\text{system}}^{(t)}\triangleq &\  \sum_{i_1\in[N_A]}\sum_{i_2\in[N_A]}[\bm{\Sigma_S}^{(t)}]_{i_1,i_2}\sum_{k\in\mathcal{N}_{D,i_1}}{\left(\sqrt{v_{k}^{(t)}}-c_{i_1}^{(t)}{\bm{\alpha}_{k}^{(t)}}^\mathsf{T}{\bm{H}_{i_1k}^{(t)}}^\mathsf{T}{\bm{\beta}_{i_1}^{(t)}}^\mathsf{\dag}\right)}^\mathsf{H}\\
		\nonumber&\times\sum_{k\in\mathcal{N}_{D,i_2}}\left(\sqrt{v_{k}^{(t)}}-c_{i_2}^{(t)}{\bm{\alpha}_{k}^{(t)}}^\mathsf{T}{\bm{H}_{i_2k}^{(t)}}^\mathsf{T}{\bm{\beta}_{i_2}^{(t)}}^\mathsf{\dag}\right) + \sum_{i\in[N_A]}\varepsilon_i^{(t)} {c^{(t)}_i}^2\|\bm{\beta}_i^{(t)}\|^2\\
		& +  {\boldsymbol{c}^{(t)}}^\mathsf{T} \boldsymbol{\Sigma}_S^{(t)} \boldsymbol{c}^{(t)}-{\boldsymbol{c}^{(t)}}^\mathsf{T} \boldsymbol{\Sigma}_S^{(t)}\left(\boldsymbol{\Sigma}_S^{(t)}+\boldsymbol{\Sigma}_V^{(t)}\right)^{-1} {\boldsymbol{\Sigma}_S^{(t)}}^\mathsf{T} \boldsymbol{c}^{(t)}\in\mathbb{R},
	\end{align}
	the matrices $\bm{\Sigma}_S^{(t)}$ and $\bm{\Sigma}_V^{(t)}\in\mathbb{R}^{N_A\times N_A}$ are given in Assumption \ref{gaussian ass}, $\varepsilon_i\triangleq\frac{1}{N}\mathbb{E}[\|\bm{Z}_i^{(t)}\|^2]\in\mathbb{R}, \forall i \in [N_A]$ is the noise power\footnote{The noise power $\varepsilon_i$ is estimated at each AP $i$ by the expectation-maximization algorithm \cite{vila_expectation-maximization_2013}.}, $\bm{\alpha}_k^{(t)}, \bm{\beta}_i^{(t)}, \bm{H}_{ik}^{(t)}, \forall i,k, t$ are given in (\ref{system:wireless}), $\bm{c}^{(t)}$ is given in Lemma \ref{RDin}, and $\{v_k^{(t)}\}_{k=1}^{N_D}$ are given in (\ref{device:normalization}). 
\end{lemma}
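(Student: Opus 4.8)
The plan is to establish the bound by exploiting the decomposition $\bm{e}^{(t)}=\bm{e}_1^{(t)}+\bm{e}_2^{(t)}$ given in (\ref{error analysis}), where $\bm{e}_1^{(t)}$ is the wireless transmission error and $\bm{e}_2^{(t)}$ is the wired transmission error. First I would use the elementary inequality $\|\bm{a}+\bm{b}\|^2 \le 2\|\bm{a}\|^2 + 2\|\bm{b}\|^2$ to write $\mathbb{E}[\|\bm{e}^{(t)}\|^2] \le 2\,\mathbb{E}[\|\bm{e}_1^{(t)}\|^2] + 2\,\mathbb{E}[\|\bm{e}_2^{(t)}\|^2]$, which matches the factor $2N$ appearing in the statement once we show each expected squared norm contributes $N$ times the corresponding group of terms in $D_{\text{system}}^{(t)}$. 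The target decomposition already suggests which terms belong where: the first double sum together with the noise-power term $\sum_i \varepsilon_i^{(t)}{c_i^{(t)}}^2\|\bm{\beta}_i^{(t)}\|^2$ should come from $\mathbb{E}[\|\bm{e}_1^{(t)}\|^2]$, while the last two (trace-like) terms are exactly $\frac{1}{N}\mathbb{E}[\|\bm{z}^{(t)}-\hat{\bm{z}}^{(t)}\|^2]$ from Lemma \ref{RDin}, i.e.\ from $\mathbb{E}[\|\bm{e}_2^{(t)}\|^2]$.

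For the wireless error term $\bm{e}_1^{(t)} = \sum_{k}\sqrt{v_k^{(t)}}\tilde{\bm{g}}_k^{(t)} - \sum_i c_i^{(t)}\bm{s}_i^{(t)}$, I would substitute the explicit expression for $\bm{s}_i^{(t)}$ obtained by unrolling (\ref{acc:receive})-(\ref{acc:process}). The key observation is that the combined signal $\hat{\bm{r}}_i^{(t)}$ contains a signal part, in which each device $k\in\mathcal{N}_{D,i}$ contributes $\bm{r}_k^{(t)}({\bm{\beta}_i^{(t)}}^{\mathrm{H}}\bm{H}_{ik}^{(t)}\bm{\alpha}_k^{(t)})$, plus a noise part ${\bm{Z}_i^{(t)}}^\mathsf{T}{\bm{\beta}_i^{(t)}}^\dagger$. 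After mapping back from the complex domain (\ref{device:complex}) to the real domain (\ref{acc:process}), the signal part contributes a residual coefficient $\sqrt{v_k^{(t)}} - c_{i}^{(t)}{\bm{\alpha}_k^{(t)}}^\mathsf{T}{\bm{H}_{ik}^{(t)}}^\mathsf{T}{\bm{\beta}_i^{(t)}}^\dagger$ multiplying $\tilde{\bm{g}}_k^{(t)}$, and the noise part is independent of the gradients. I would then expand $\mathbb{E}[\|\bm{e}_1^{(t)}\|^2]$, cross-multiplying the residual signal terms and the noise term; the noise cross-terms vanish since the AWGN is zero-mean and independent of the data, leaving the noise-power contribution $N\sum_i \varepsilon_i^{(t)}{c_i^{(t)}}^2\|\bm{\beta}_i^{(t)}\|^2$. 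For the signal cross-terms, I would invoke the correlation hypothesis $\frac{1}{N}\mathbb{E}[\|(\tilde{\bm{g}}_{k_1}^{(t)})^\mathsf{H}\tilde{\bm{g}}_{k_2}^{(t)}\|^2] = [\bm{\Sigma}_S^{(t)}]_{i_1,i_2}$ for $k_1\in\mathcal{N}_{D,i_1}, k_2\in\mathcal{N}_{D,i_2}$ to replace the inner products between normalized gradients by the corresponding entries of $\bm{\Sigma}_S^{(t)}$, producing precisely the first double-sum term.

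For the wired error term $\bm{e}_2^{(t)} = \bm{z}^{(t)}-\hat{\bm{z}}^{(t)}$, I would directly apply Lemma \ref{RDin}, whose part (iii) and accompanying closed-form expression give $\frac{1}{N}\mathbb{E}[\|\bm{z}^{(t)}-\hat{\bm{z}}^{(t)}\|^2] = {\bm{c}^{(t)}}^\mathsf{T}\bm{\Sigma}_S^{(t)}\bm{c}^{(t)} - {\bm{c}^{(t)}}^\mathsf{T}\bm{\Sigma}_S^{(t)}(\bm{\Sigma}_S^{(t)}+\bm{\Sigma}_V^{(t)})^{-1}{\bm{\Sigma}_S^{(t)}}^\mathsf{T}\bm{c}^{(t)}$, which is exactly the remaining pair of terms in $D_{\text{system}}^{(t)}$. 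Summing the two contributions and multiplying by the factor $2$ from the norm inequality yields the claimed bound $2N D_{\text{system}}^{(t)}$.

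The main obstacle I anticipate is the careful bookkeeping in the wireless part: correctly tracking how the complex-to-real remapping in (\ref{device:complex})-(\ref{acc:process}) relates $\mathbb{E}[\|\bm{s}_i^{(t)}\|^2]$ and the cross-correlations to the real-domain inner products $(\tilde{\bm{g}}_{k}^{(t)})^\mathsf{T}\tilde{\bm{g}}_{k'}^{(t)}$, and verifying that the factor-of-two normalization (recall $\mathbb{E}[|r_{k,c}^{(t)}|^2]=2$ and $C=N/2$ from (\ref{device:power constraint})) is consistent so that the correlation hypothesis plugs in cleanly. A secondary subtlety is justifying that the noise and signal terms decouple in expectation, and confirming that the cross-covariances across distinct APs $i_1\neq i_2$ survive the expansion to reproduce the full double sum rather than only its diagonal. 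I would also need to confirm that the inequality (as opposed to equality) in the statement is entirely accounted for by the single application of $\|\bm{a}+\bm{b}\|^2\le 2\|\bm{a}\|^2+2\|\bm{b}\|^2$, with both $\mathbb{E}[\|\bm{e}_1^{(t)}\|^2]$ and $\mathbb{E}[\|\bm{e}_2^{(t)}\|^2]$ evaluated as equalities (the latter holding asymptotically in $N$ per Definition \ref{defi1}).
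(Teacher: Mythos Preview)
Your proposal is correct and follows essentially the same approach as the paper: decompose $\bm{e}^{(t)}=\bm{e}_1^{(t)}+\bm{e}_2^{(t)}$, apply $\|\bm{a}+\bm{b}\|^2\le 2\|\bm{a}\|^2+2\|\bm{b}\|^2$, unroll (\ref{device:complex}) and (\ref{acc:receive})--(\ref{acc:process}) to expand $\mathbb{E}[\|\bm{e}_1^{(t)}\|^2]$ into a signal part (evaluated via the correlation hypothesis) plus a noise part (decoupled by independence), and invoke Lemma~\ref{RDin} for $\mathbb{E}[\|\bm{e}_2^{(t)}\|^2]$. The obstacles you flag---the real/complex bookkeeping and the noise--signal decoupling---are exactly the steps the paper works through, and your anticipation that the only inequality is the single norm split is correct (the paper's intermediate ``$\le$'' when separating noise is in fact an equality under the stated independence).
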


\begin{proof}
See Appendix \ref{app0}.
\end{proof}

\subsection{Convergence Analysis of OA-FL in MIMO Cloud-RAN}
We now present convergence analysis to establish the relationship between the FL convergence rate and the communication error (\ref{ana:error}). Following \cite{friedlander_hybrid_2012}, we introduce standard assumptions in stochastic optimization below.
\begin{assumption}\label{sec:conver,asu:assumption 1}
	(i)	$\mathcal{L}(\cdot)$ is strongly convex with some (positive) parameter $\mu$. That is, $\mathcal{L}(\bm{y}) \geq \mathcal{L}(\bm{x})+(\bm{y}-\bm{x})^{T} \nabla \mathcal{L}(\bm{x})+\frac{\mu}{2} \| \bm{y}-\bm{x} \|^{2}, \forall \bm{x}, \bm{y} \in \mathbb{R}^{N}$. (ii) The gradient $\nabla \mathcal{L}(\cdot)$ is Lipschitz continuous with some (positive) parameter $L$. That is, $\|\nabla \mathcal{L}(\bm{x})-\nabla \mathcal{L}(\bm{y})\| \leq L\|\bm{x}-\bm{y}\|, \forall \bm{x}, \bm{y} \in \mathbb{R}^{N}$. (iii) $\mathcal{L}(\cdot)$ is twice continuously differentiable. (iv) The gradient with respect to any training sample, denoted by $\nabla l(\bm{\theta}; \cdot)$, is upper bounded at $\bm{\theta}$ as $\nonumber\left\|\nabla l(\bm{\theta},\bm{b})\right\|^{2} \leq \gamma_{1}+ \gamma_{2} \left\|\nabla \mathcal{L}\left(\bm{\theta}\right)\right\|^{2}$, where $\bm{b}$ is a data sample from the whole datasets $\{\mathcal{B}_k\}_{k=1}^{N_D}$, and $\gamma_{1} \geq 0, \gamma_{2}>0$ are some constants.
\end{assumption}  	 	
Assumption \ref{sec:conver,asu:assumption 1}-(i) ensures that a global optimum exists for the loss function $\mathcal{L}(\cdot)$. Assumption \ref{sec:conver,asu:assumption 1}-(ii) is imposed on the gradient function $\nabla\mathcal{L}(\cdot)$ to prevent the function value from changing too fast, so as to construct a more robust machine learning model. Assumption \ref{sec:conver,asu:assumption 1}-(iv) provides a bound on the norm of the local gradients, i.e., $\bm{g}_{k}$. Assumption \ref{sec:conver,asu:assumption 1} leads to an upper bound on the loss function $\mathcal{L}(\bm{\theta}^{(t+1)})$ with respect to the model updating (\ref{cs:model update}), as given in the following lemma.
\begin{lemma}\label{lemma1}
	Assume that $\mathcal{L}(\cdot)$ satisfies  Assumption \ref{sec:conver,asu:assumption 1},
	at the $t$-th training round with 
	the learning rate $\eta$ is set to $1/L$. Then
	\begin{align} \label{equ:lemma}
		\mathbb{E}[\mathcal{L}(\bm{\theta}^{(t+1)})] \leq  	\mathbb{E}[\mathcal{L}(\bm{\theta}^{(t)})]-\frac{1}{2L}	\mathbb{E}[\|\nabla \mathcal{L}(\bm{\theta}^{(t)})\|^{2}]+\frac{1}{2L}	\mathbb{E}[ \|\bm{e}^{(t)}\|^{2}],
	\end{align}
	where the Lipschitz constant $L$ is given in Assumption \ref{sec:conver,asu:assumption 1}- (ii).
\end{lemma}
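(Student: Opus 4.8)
The plan is to derive a one-step descent inequality that holds pointwise (for every realization of the randomness entering $\hat{\bm{g}}^{(t)}$) and then take expectations at the very end. The only structural ingredient I actually need is the $L$-smoothness of $\mathcal{L}$ from Assumption \ref{sec:conver,asu:assumption 1}-(ii),(iii); the strong convexity in part (i) is not invoked here, as it governs the global convergence rate rather than this single-round bound. First I would record the standard quadratic upper bound (descent lemma): since $\nabla\mathcal{L}$ is $L$-Lipschitz and $\mathcal{L}$ is twice continuously differentiable, integrating $\nabla\mathcal{L}$ along the segment joining any $\bm{x}$ and $\bm{y}$ gives
\[
	\mathcal{L}(\bm{y}) \leq \mathcal{L}(\bm{x}) + (\bm{y}-\bm{x})^\mathsf{T}\nabla\mathcal{L}(\bm{x}) + \frac{L}{2}\|\bm{y}-\bm{x}\|^2 .
\]

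Next I would specialize this with $\bm{x}=\bm{\theta}^{(t)}$ and $\bm{y}=\bm{\theta}^{(t+1)}$, using the update rule $\bm{\theta}^{(t+1)}-\bm{\theta}^{(t)} = -\eta\hat{\bm{g}}^{(t)} = -\tfrac{1}{L}\hat{\bm{g}}^{(t)}$ from (\ref{cs:model update}) with $\eta = 1/L$. This yields $\mathcal{L}(\bm{\theta}^{(t+1)}) \leq \mathcal{L}(\bm{\theta}^{(t)}) - \tfrac{1}{L}\nabla\mathcal{L}(\bm{\theta}^{(t)})^\mathsf{T}\hat{\bm{g}}^{(t)} + \tfrac{1}{2L}\|\hat{\bm{g}}^{(t)}\|^2$, where the quadratic coefficient has simplified to $\tfrac{L}{2}\cdot\tfrac{1}{L^2}=\tfrac{1}{2L}$. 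I would then eliminate $\hat{\bm{g}}^{(t)}$ via $\hat{\bm{g}}^{(t)} = \nabla\mathcal{L}(\bm{\theta}^{(t)}) - \bm{e}^{(t)}$, which is just a rearrangement of the error definition (\ref{ana:error}). Expanding the inner product and the squared norm, the two cross terms $\pm\tfrac{1}{L}\nabla\mathcal{L}(\bm{\theta}^{(t)})^\mathsf{T}\bm{e}^{(t)}$ cancel exactly, leaving
\[
	\mathcal{L}(\bm{\theta}^{(t+1)}) \leq \mathcal{L}(\bm{\theta}^{(t)}) - \frac{1}{2L}\|\nabla\mathcal{L}(\bm{\theta}^{(t)})\|^2 + \frac{1}{2L}\|\bm{e}^{(t)}\|^2 .
\]
Taking the expectation of both sides and using linearity gives exactly (\ref{equ:lemma}).

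The computation is entirely routine, so there is no genuine obstacle; the one point worth emphasizing is \emph{why} the choice $\eta = 1/L$ matters. It is precisely this coefficient that makes the gradient–error cross term vanish, so that no hypothesis on $\bm{e}^{(t)}$ beyond finite second moment is needed. In particular, the cancellation is algebraic and holds per realization, so I do \emph{not} require $\bm{e}^{(t)}$ to be zero-mean or independent of $\bm{\theta}^{(t)}$ (which is itself random through the accumulated errors of previous rounds); deferring the expectation to the final step suffices. The only care required is to apply the descent lemma with the correct quadratic coefficient so that this exact cancellation occurs.
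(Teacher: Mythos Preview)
Your argument is correct and is precisely the standard descent-lemma derivation; the paper does not give its own proof but simply cites \cite[Lemma~2.1]{friedlander_hybrid_2012}, whose proof is exactly the computation you outline. Your remark that the cross term cancels algebraically (so no unbiasedness or independence assumption on $\bm{e}^{(t)}$ is needed) is accurate and worth keeping.
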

\begin{proof}
	See \cite[Lemma 2.1]{friedlander_hybrid_2012}.
\end{proof}

We are now ready to present an upper bound of the expected difference between the training loss and the optimal loss at round $t+1$, i.e., $\mathbb{E}[\mathcal{L}(\bm{\theta}^{(t+1)})-\mathcal{L}(\bm{\theta}^{(\star)})]$.
\begin{thm}\label{thm:1}
	Based on Assumption  \ref{sec:conver,asu:assumption 1}, the loss function at the $t$-th round $\mathcal{L}(\bm{\theta})^{(t)}$ satisfies
	\begin{equation}\label{equ:thm1}
		\mathbb{E}[\mathcal{L}(\bm{\theta}^{(t+1)})] -\mathbb{E}[\mathcal{L}(\bm{\theta}^{(\star)})]\leq \left(1-\frac{\mu}{L}\right)^{t}\left(\mathbb{E}[\mathcal{L}(\bm{\theta}^{(1)})] -\mathbb{E}[\mathcal{L}(\bm{\theta}^{(\star)})]\right)+\sum_{t'=1}^t \left(1-\frac{\mu}{L}\right)^{t-t'}\frac{N}{L}D_\mathsf{system}^{(t')},
	\end{equation}
where $(\cdot)^{t}$ denotes the $t$-th power operator, $\mathcal{L}(\cdot)$ is the total empirical loss function given in (\ref{system model0}), $\bm{\theta}^{(1)}$ is the initial system model parameter, and $D_\mathsf{system}^{(t)}$ is given in Lemma \ref{all_bound}.
\end{thm}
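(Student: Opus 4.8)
The plan is to prove Theorem \ref{thm:1} by unrolling the one-step recursion supplied by Lemma \ref{lemma1}, after first converting that bound into a contraction on the optimality gap. The starting point is Lemma \ref{lemma1}, which gives
\begin{equation}\label{plan:onestep}
	\mathbb{E}[\mathcal{L}(\bm{\theta}^{(t+1)})] \leq \mathbb{E}[\mathcal{L}(\bm{\theta}^{(t)})]-\tfrac{1}{2L}\mathbb{E}[\|\nabla \mathcal{L}(\bm{\theta}^{(t)})\|^{2}]+\tfrac{1}{2L}\mathbb{E}[\|\bm{e}^{(t)}\|^{2}].
\end{equation}
First I would invoke strong convexity (Assumption \ref{sec:conver,asu:assumption 1}-(i)) in its consequence form, the Polyak--{\L}ojasiewicz inequality $\|\nabla \mathcal{L}(\bm{\theta})\|^{2}\geq 2\mu\left(\mathcal{L}(\bm{\theta})-\mathcal{L}(\bm{\theta}^{(\star)})\right)$, which follows by minimizing both sides of the strong-convexity inequality over $\bm{y}$. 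Substituting this into \eqref{plan:onestep} and subtracting $\mathbb{E}[\mathcal{L}(\bm{\theta}^{(\star)})]$ from both sides turns the bound into the geometric recursion
\begin{equation}\label{plan:recursion}
	\Delta^{(t+1)} \leq \left(1-\tfrac{\mu}{L}\right)\Delta^{(t)} + \tfrac{1}{2L}\mathbb{E}[\|\bm{e}^{(t)}\|^{2}],
\end{equation}
where I write $\Delta^{(t)}\triangleq\mathbb{E}[\mathcal{L}(\bm{\theta}^{(t)})]-\mathbb{E}[\mathcal{L}(\bm{\theta}^{(\star)})]$.

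The second step is to bound the per-round error term. Lemma \ref{all_bound} gives $\mathbb{E}[\|\bm{e}^{(t)}\|^{2}]\leq 2ND_{\mathsf{system}}^{(t)}$, so the additive term in \eqref{plan:recursion} is at most $\tfrac{N}{L}D_{\mathsf{system}}^{(t)}$. The final step is to unroll \eqref{plan:recursion} from round $1$ to round $t$. Iterating the recursion $t$ times produces the homogeneous part $\left(1-\tfrac{\mu}{L}\right)^{t}\Delta^{(1)}$ together with the accumulated error $\sum_{t'=1}^{t}\left(1-\tfrac{\mu}{L}\right)^{t-t'}\tfrac{N}{L}D_{\mathsf{system}}^{(t')}$, which is exactly the right-hand side of \eqref{equ:thm1}. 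This unrolling is a routine induction on $t$: assuming the bound holds at round $t$, multiply by the contraction factor $\left(1-\tfrac{\mu}{L}\right)$ and add the round-$t$ error term, matching the structure of \eqref{plan:recursion}.

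The technical caveat I would want to verify is that the contraction factor is genuinely in $[0,1)$, i.e. that $0<\mu/L\leq 1$; this is guaranteed because any $\mu$-strongly convex function with $L$-Lipschitz gradient satisfies $\mu\leq L$, so the geometric decay is legitimate and the homogeneous term vanishes as $t\to\infty$. I do not expect a genuine obstacle here, since the heavy lifting — the one-step descent estimate and the characterization of $\mathbb{E}[\|\bm{e}^{(t)}\|^{2}]$ — is already discharged by Lemmas \ref{lemma1} and \ref{all_bound}. The only mild subtlety is being careful with the index bookkeeping in the unrolling so that the exponents on the contraction factor and the summation range over $t'$ align with the stated form; in particular the error at round $t'$ is attenuated by exactly $t-t'$ applications of the contraction, which is what yields the weighting $\left(1-\tfrac{\mu}{L}\right)^{t-t'}$ inside the sum.
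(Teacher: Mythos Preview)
Your proposal is correct and follows essentially the same approach as the paper: invoke the Polyak--{\L}ojasiewicz inequality (which the paper cites as \cite[eq.~(2.4)]{friedlander_hybrid_2012}) to turn Lemma~\ref{lemma1} into a one-step contraction on the optimality gap, plug in Lemma~\ref{all_bound} to bound the error term by $\tfrac{N}{L}D_{\mathsf{system}}^{(t)}$, and then unroll recursively. The paper's proof is slightly more terse, but the ingredients and their order match yours exactly.
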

\begin{proof}
	From \cite[eq. (2.4)]{friedlander_hybrid_2012}, we have 
	\begin{equation}\label{equ:lemma 2.4}
		\mathbb{E}[||\nabla \mathcal{L}(\bm{\theta}^{(t)})||^{2}] \geq 2 \mu(\mathbb{E}[\mathcal{L}(\bm{\theta}^{(t)})]-\mathbb{E}[\mathcal{L}(\bm{\theta}^{(\star)})]).
	\end{equation}
	Subtracting $\mathbb{E}[\mathcal{L}(\bm{\theta}^{(\star)})]$ on both sides of (\ref{equ:lemma}) in Lemma \ref{lemma1} and plugging (\ref{equ:lemma 2.4}) and Lemma \ref{all_bound} into (\ref{equ:lemma}), we obtain
	\begin{equation}\label{equ:theta_star_one}
		\mathbb{E}[\mathcal{L}(\bm{\theta}^{(t+1)})]-\mathbb{E}[\mathcal{L}(\bm{\theta}^{(\star)})]\leq \left(1-\frac{\mu}{L}\right)
		\left(\mathbb{E}[\mathcal{L}(\bm{\theta}^{(t)})]-\mathbb{E}[\mathcal{L}(\bm{\theta}^{(\star)})]\right)+\frac{N }{L}D_{\mathsf{system}}^{(t)}.
	\end{equation}
	Applying the above inequality recursively yields (\ref{equ:thm1}), which completes the proof.
\end{proof}

From Theorem \ref{thm:1}, we see that $\mathbb{E}[\mathcal{L}(\bm{\theta}^{(t)})-\mathcal{L}(\bm{\theta}^{(\star)})]$, i.e., the expected difference between the training loss and the optimal loss at the $t$-th round, is upper bounded by the right-hand side of the inequality in (\ref{equ:thm1}). Moreover, this upper bound converges with speed $1-\frac{\mu}{L}$, since $\frac{\mu}{L}>0$. Empirically, we find that our proposed scheme always converges with appropriately chosen system parameters. 
The following corollary further characterizes the convergence behavior of $\mathbb{E}[\mathcal{L}(\bm{\theta}^{(t+1)})-\mathcal{L}(\bm{\theta}^{(\star)})]$.

\begin{coro}\label{coro1}
	As $t\rightarrow\infty$, we have 
	\begin{equation}\label{final_convegence_gap}
		\lim_{t \rightarrow \infty}\mathbb{E}[\mathcal{L}(\bm{\theta}^{(t+1)})-\mathcal{L}(\bm{\theta}^{(\star)})]\leq \sum_{t'=1}^t \left(1-\frac{\mu}{L}\right)^{t-t'}\frac{N}{L}D_{\mathsf{system}}^{(t')}
	\end{equation}
\end{coro}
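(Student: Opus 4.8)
The plan is to obtain the corollary directly from the recursive bound in Theorem \ref{thm:1} by passing to the limit $t\to\infty$. Theorem \ref{thm:1} already decomposes the optimality gap at round $t+1$ into a transient term, namely $\left(1-\frac{\mu}{L}\right)^{t}\left(\mathbb{E}[\mathcal{L}(\bm{\theta}^{(1)})] -\mathbb{E}[\mathcal{L}(\bm{\theta}^{(\star)})]\right)$, which depends only on the initialization, and an accumulated-distortion term $\sum_{t'=1}^t \left(1-\frac{\mu}{L}\right)^{t-t'}\frac{N}{L}D_\mathsf{system}^{(t')}$, which aggregates the per-round system distortions. First I would argue that the transient term decays to zero, so that in the limit only the accumulated-distortion term survives as the controlling bound.

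The key step is to show that the contraction factor satisfies $0\le 1-\frac{\mu}{L}<1$. Since $\mathcal{L}$ is $\mu$-strongly convex (Assumption \ref{sec:conver,asu:assumption 1}-(i)) and $\nabla\mathcal{L}$ is $L$-Lipschitz (Assumption \ref{sec:conver,asu:assumption 1}-(ii)), a standard sandwiching argument gives $0<\mu\le L$: the strong-convexity lower bound forces the curvature of $\mathcal{L}$ to be at least $\mu$, while the Lipschitz gradient bound caps it at $L$, so $\mu/L\in(0,1]$ and hence $1-\mu/L\in[0,1)$. Consequently $\left(1-\frac{\mu}{L}\right)^{t}\to 0$ as $t\to\infty$, and because $\mathbb{E}[\mathcal{L}(\bm{\theta}^{(1)})]-\mathbb{E}[\mathcal{L}(\bm{\theta}^{(\star)})]$ is a fixed finite constant, the entire transient term vanishes in the limit.

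Taking $t\to\infty$ on both sides of (\ref{equ:thm1}) and dropping the vanishing transient term then yields (\ref{final_convegence_gap}), which completes the proof. The main subtlety to flag is interpretive rather than computational: the right-hand side of (\ref{final_convegence_gap}) still carries an explicit dependence on $t$ through the upper limit of the sum and the exponents, so the statement should be read as asserting that the asymptotic optimality gap is governed entirely by the accumulated system distortions $\{D_\mathsf{system}^{(t')}\}$ rather than by the initial loss gap. If one instead wanted a genuinely $t$-free residual floor, the cleanest route would be to bound the distortions uniformly, say $D_\mathsf{system}^{(t')}\le D_{\max}$, and sum the resulting geometric series $\sum_{j=0}^{t-1}(1-\mu/L)^{j}\to L/\mu$ to obtain $\limsup_{t\to\infty}\mathbb{E}[\mathcal{L}(\bm{\theta}^{(t+1)})-\mathcal{L}(\bm{\theta}^{(\star)})]\le \frac{N}{\mu}D_{\max}$. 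I expect this reconciliation of the lingering $t$-dependence to be the only delicate point, since the limiting argument itself is immediate once the contraction factor is pinned inside $[0,1)$.
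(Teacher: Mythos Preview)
Your proposal is correct and follows essentially the same route as the paper: show that $(1-\mu/L)^t\to 0$ and then take the limit in (\ref{equ:thm1}) so that only the accumulated-distortion sum remains. Your argument is in fact slightly more careful than the paper's, which simply notes $\mu/L>0$ without explicitly invoking $\mu\le L$ to pin the factor in $[0,1)$; your closing remark about the lingering $t$-dependence on the right-hand side is also a fair observation, though the paper does not address it.
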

\begin{proof}
	By noting $\frac{\mu}{L}>0$, we have $\lim_{t\rightarrow\infty}(1-\frac{\mu}{L})^t=0$. Plugging this result into (\ref{equ:thm1}), we obtain (\ref{final_convegence_gap}).
\end{proof}

Corollary \ref{coro1} shows that our proposed scheme guarantees to converge. However, there generally exists a gap between the converged loss $\lim_{t \rightarrow \infty}\mathbb{E}[\mathcal{L}(\bm{\theta}^{(t+1)})]$ and the optimal one $\mathbb{E}[\mathcal{L}(\bm{\theta}^{(\star)})]$ because of the communication error. Therefore, we next aim to minimize the gap, so as to optimize our proposed framework.

\section{System Optimization}\label{optimization}
\subsection{Problem Formulation}
From Theorem \ref{thm:1} and Corollary \ref{coro1}, the upper bound $\sum_{t'=1}^t \left(1-\frac{\mu}{L}\right)^{t-t'}\frac{N}{L}D_{\mathsf{system}}^{(t')}$ in (\ref{final_convegence_gap}) represents the impact of the wireless transmission error and the wired transmission error on the asymptotic learning performance. Specifically, a smaller $\sum_{t'=1}^t \left(1-\frac{\mu}{L}\right)^{t-t'}\frac{N}{L}D_{\mathsf{system}}^{(t')}$ leads to a smaller gap in $\lim_{t \rightarrow \infty}\mathbb{E}[\mathcal{L}(\bm{\theta}^{(t+1)})-\mathcal{L}(\bm{\theta}^{(\star)})]$. This motivates us to use $\sum_{t'=1}^t \left(1-\frac{\mu}{L}\right)^{t-t'}\frac{N}{L}D_{\mathsf{system}}^{(t')}$ as the performance metric of our proposed framework, i.e., to minimize $D_{\mathsf{system}}^{(t)}$ in Theorem \ref{thm:1} over $\{\bm{\beta}_i^{(t)}\}_{i=1}^{N_A}, \{\bm{\alpha}_k^{(t)}\}_{k=1}^{N_D}, \bm{\Sigma}_V^{(t)}$ and $\bm{c}^{(t)}$ at each round $t$. Since the optimization problem takes the same form over rounds, we simplify the notation by omitting the index $t$, and formulate the design problem as follows:
\begin{subequations}\label{opti:p1}
\begin{align}
 \nonumber \underset{\bm{\Sigma_V},\bm{c},\{\bm{\beta}_i\}_{i=1}^{N_A},\{\bm{\alpha}_k\}_{k=1}^{N_D}}{\textup{minimize}}&\quad \sum_{i_1\in[N_A]}\sum_{i_2\in[N_A]}[\bm{\Sigma_S}]_{i_1,i_2}\sum_{k\in\mathcal{N}_{D,i_1}}{\left(\sqrt{v_{k}}-c_{i_1}{\bm{\alpha}_{k}}^\mathsf{T}{\bm{H}_{i_1k}}^\mathsf{T}{\bm{\beta}_{i_1}}^\mathsf{\dag}\right)}^\mathsf{H}\\
 \nonumber&\times\sum_{k\in\mathcal{N}_{D,i_2}}\left(\sqrt{v_{k}}-c_{i_2}{\bm{\alpha}_{k}}^\mathsf{T}{\bm{H}_{i_2k}}^\mathsf{T}{\bm{\beta}_{i_2}}^\mathsf{\dag}\right)+\sum_{i\in[N_A]} \varepsilon_ic_{i}^2\|\bm{\beta}_i\|^2\\
 & +  \boldsymbol{c}^\mathsf{T} \boldsymbol{\Sigma}_S \boldsymbol{c}-\boldsymbol{c}^\mathsf{T} \boldsymbol{\Sigma}_S\left(\boldsymbol{\Sigma}_S+\boldsymbol{\Sigma}_V\right)^{-1} \boldsymbol{\Sigma}_S^\mathsf{T} \boldsymbol{c}, \label{p1_objec}\\
	\textup{subject to}\quad
	&\|\bm{\alpha}_k\|^2\leq \frac{P_k}{2}, \forall k\in[N_D],\label{p1_con1}\\
	& \frac{1}{2} \log \left(\frac{\operatorname{det}\left(\bm{\Sigma}_{S}+\bm{\Sigma}_{V}\right)}{\operatorname{det}\left(\bm{\Sigma}_{V}\right)}\right) \leq \sum_{i \in[N_A]} r_{i},\label{p1_con2}\\
	&\frac{1}{2} \log \left(\frac{\operatorname{det}\left(\bm{\Sigma}_S+\bm{\Sigma}_V\right)}{\operatorname{det}\left([\bm{\Sigma}_S+\bm{\Sigma}_V]_{\mathcal{K}^c}\right) \operatorname{det}\left([\bm{\Sigma}_V]_{\mathcal{K}}\right)}\right) \leq \sum_{i\in\mathcal{K}}r_i,\forall \mathcal{K} \subset[N_A], \mathcal{K} \neq \varnothing, \label{p1_con3}
\end{align}
\end{subequations}
where the power constraint in (\ref{p1_con1}) is from (\ref{device:power constraint}), the constraints (\ref{p1_con2})-(\ref{p1_con3}) are from Lemma \ref{RDin} and (\ref{bit con}). To solve the optimization problem in (\ref{opti:p1}), we propose an alternating optimization (AO) based algorithm, which is discussed in the following subsections.


\subsection{Optimization of each $\bm{\alpha}$ with fixed $\bm{\beta}, \bm{\Sigma}_V$ and $\bm{c}$}
With fixed $\{\bm{\alpha}_j\}_{j\ne k}$, $\{\bm{\beta}_i\}_{i=1}^{N_A},\bm{\Sigma_V}$ and $\bm{c}$, the problem in (\ref{opti:p1}) for optimizing $\bm{\alpha}_k$ reduces to
\begin{subequations}\label{problem r0}
	\begin{align}
		\nonumber\underset{\bm{\alpha}_k}{\textup{minimize}}\quad&  \sum_{i_1\in[N_A]}\sum_{i_2\in[N_A]}[\bm{\Sigma_S}]_{i_1,i_2}\sum_{k\in\mathcal{N}_{D,i_1}}{\left(\sqrt{v_{k}}-c_{i_1}{\bm{\alpha}_{k}}^\mathsf{T}{\bm{H}_{i_1k}}^\mathsf{T}{\bm{\beta}_{i_1}}^\mathsf{\dag}\right)}^\mathsf{H}\\
		&\times\sum_{k\in\mathcal{N}_{D,i_2}}\left(\sqrt{v_{k}}-c_{i_2}{\bm{\alpha}_{k}}^\mathsf{T}{\bm{H}_{i_2k}}^\mathsf{T}{\bm{\beta}_{i_2}}^\mathsf{\dag}\right),\label{problem3_obj}\\
		\text{subject to}\quad&\|\bm{\alpha}_k\|^2\leq \frac{P_k}{2},\label{problem3_con}
	\end{align}
\end{subequations}
where the objective function (\ref{problem3_obj}) and the constraint (\ref{problem3_con}) are both convex w.r.t. $\bm{\alpha}_k$. Therefore, (\ref{problem r0}) is a convex quadratically constrained quadratic programming (QCQP) problem that can be solved using standard convex optimization tools.

\subsection{Optimization of each $\bm{\beta}$ with fixed $\bm{\alpha}, \bm{\Sigma}_V$ and $\bm{c}$}
With fixed $\{\bm{\alpha}_k\}_{k=1}^{N_D}$, $\{\bm{\beta}_j\}_{j \neq i}, \bm{\Sigma}_V$ and $\bm{c}$, the problem in (\ref{opti:p1}) for optimizing $\bm{\beta}_i$ reduces to
\begin{align}\label{problem p4}
	\nonumber\underset{\bm{\beta}_i}{\textup{minimize}}\quad& \sum_{i_1\in[N_A]}\sum_{i_2\in[N_A]}[\bm{\Sigma_S}]_{i_1,i_2}\sum_{k\in\mathcal{N}_{D,i_1}}{\left(\sqrt{v_{k}}-c_{i_1}{\bm{\alpha}_{k}}^\mathsf{T}{\bm{H}_{i_1k}}^\mathsf{T}{\bm{\beta}_{i_1}}^\mathsf{\dag}\right)}^\mathsf{H}\\
	&\times\sum_{k\in\mathcal{N}_{D,i_2}}\left(\sqrt{v_{k}}-c_{i_2}{\bm{\alpha}_{k}}^\mathsf{T}{\bm{H}_{i_2k}}^\mathsf{T}{\bm{\beta}_{i_2}}^\mathsf{\dag}\right)+\varepsilon_ic_{i}^2\|\bm{\beta}_i\|^2.
\end{align}
Since the problem in (\ref{problem p4}) is a quadratic convex problem, we set the derivative of (\ref{problem p4}) with respect to $\bm{\beta}_i$ to zero, yielding a closed-form expression of $\bm{\beta}_i^\dagger$ as
\begin{equation}
	\bm{\beta}_i^\dagger=\left(\varepsilon_ic_i\bm{I}\!+\!\sum_{j\in[N_A]}\![\bm{\Sigma}_S]_{i,j}\!\sum_{k\in\mathcal{N}_{D,i}}\!\bm{H}_{ik}^\dagger\bm{\alpha}_k^\dagger\!\sum_{k\in\mathcal{N}_{D,j}}\!\bm{\alpha}_k^\mathsf{T}\bm{H}_{jk}^\mathsf{T}\right)^{-1}\!\sum_{j\in[N_A]}\![\bm{\Sigma}_S]_{i,j}\!\sum_{k\in\mathcal{N}_{D,j}}\!\sqrt{v_k}\!\sum_{k\in\mathcal{N}_{D,i}}\!\bm{H}_{ik}^\dagger\bm{\alpha}_k^\dagger,
\end{equation}
which is used for updating $\bm{\beta}_i$.


\subsection{Optimization of $\bm{c}$ with fixed $\bm{\alpha}, \bm{\Sigma}_V$ and $\bm{\beta}$}
With fixed $\{\bm{\alpha}_k\}_{k=1}^{N_D}, \{\bm{\beta}_i\}_{i=1}^{N_A}, \bm{\Sigma}_V$, the problem in (\ref{opti:p1}) for optimizing $\bm{c}$ is converted into 
	\begin{align}\label{problem r1}
		\nonumber\underset{\bm{c}}{\textup{minimize}}\quad&   \sum_{i_1\in[N_A]}\sum_{i_2\in[N_A]}[\bm{\Sigma_S}]_{i_1,i_2}\sum_{k\in\mathcal{N}_{D,i_1}}{\left(\sqrt{v_{k}}-c_{i_1}{\bm{\alpha}_{k}}^\mathsf{T}{\bm{H}_{i_1k}}^\mathsf{T}{\bm{\beta}_{i_1}}^\mathsf{\dag}\right)}^\mathsf{H}\\
		\nonumber&\times\sum_{k\in\mathcal{N}_{D,i_2}}\left(\sqrt{v_{k}}-c_{i_2}{\bm{\alpha}_{k}}^\mathsf{T}{\bm{H}_{i_2k}}^\mathsf{T}{\bm{\beta}_{i_2}}^\mathsf{\dag}\right)+\sum_{i\in[N_A]} \varepsilon_ic_{i}^2\|\bm{\beta}_i\|^2\\
		&+ \boldsymbol{c}^\mathsf{T}[ \boldsymbol{\Sigma}_S - \boldsymbol{\Sigma}_S\left(\boldsymbol{\Sigma}_S+\boldsymbol{\Sigma}_V\right)^{-1} \boldsymbol{\Sigma}_S^\mathsf{T}] \boldsymbol{c},
	\end{align}
where $\bm{\Sigma}_S-\bm{\Sigma}_S(\bm{\Sigma}_S+\bm{\Sigma}_V)^{\mathsf{-1}}\bm{\Sigma}_S^\mathsf{T}$ is positive semidefinite. As the problem in (\ref{problem r1}) is a quadratic convex problem, we set the derivative of (\ref{problem r1}) with respect to $\bm{c}$ to zero, yielding
\begin{align}\label{optioptc}
	\bm{c}=\left(\bm{T}+\bm{\Omega}+\bm{\Sigma}_S-\bm{\Sigma}_S(\bm{\Sigma}_S+\bm{\Sigma}_V)^{-1}\bm{\Sigma}_S^{\mathsf{T}}\right)^{-1}\!\bm{\Sigma}_S\begin{bmatrix}
		\operatorname{Re}\{G_1\}\sum_{k\in\mathcal{N}_{D,1}}\sqrt{v_k}\\
		\vdots\\
		\operatorname{Re}\{G_{N_A}\}\sum_{k\in\mathcal{N}_{D,N_A}}\sqrt{v_k}
\end{bmatrix}
\end{align}
where $\bm{\Omega}\triangleq\operatorname{diag}(\varepsilon_1\|\bm{\beta}_1\|^2,\dots,\varepsilon_{N_A}\|\bm{\beta}_{N_A}\|^2)\in\mathbb{R}^{N_A\times N_A}$ is a dialog matrix, $\bm{T}\in{R}^{N_A\times N_A}$ is a matrix with each entry $[\bm{T}]_{i,j}=[\bm{\Sigma}]_{i,j}\operatorname{Re}\{G_i^\dagger G_j\}$, and $G_i\triangleq\sum_{k\in\mathcal{N}_{D,i}}\bm{\alpha}_k^{\mathsf{T}}\bm{H}_{ik}^{\mathsf{T}}\bm{\beta}_i^{\dagger}\in\mathbb{C}$.

\subsection{Optimization of $\bm{\Sigma}_V$ with fixed $\bm{\alpha}, \bm{c}$ and $\bm{\beta}$}
With fixed $\{\bm{\alpha}_k\}_{k=1}^{N_D}, \{\bm{\beta}_i\}_{i=1}^{N_A}, \bm{c}$, the problem in (\ref{opti:p1}) for optimizing $\bm{\Sigma}_V$ is reduced to 
\begin{subequations}\label{prob:sigma_v}
	\begin{align}
		\underset{\bm{\Sigma}_V}{\textup{maximize}}\quad&  \bm{c}^\mathsf{T}\bm{\Sigma}_S(\bm{\Sigma}_S+\bm{\Sigma}_V)^{\mathsf{-1}}\bm{\Sigma}_S^\mathsf{T}\bm{c}\label{prob:sigma_v,obj}  \\
		\text{subject to}\quad
		& \frac{1}{2} \log \left(\frac{\operatorname{det}\left(\bm{\Sigma}_{S}+\bm{\Sigma}_{V}\right)}{\operatorname{det}\left(\bm{\Sigma}_{V}\right)}\right) \leq \sum_{i \in[N_A]} r_{i}\label{optivcon1}\\
		&\frac{1}{2} \log \left(\frac{\operatorname{det}\left(\bm{\Sigma}_S+\bm{\Sigma}_V\right)}{\operatorname{det}\left([\bm{\Sigma}_S+\bm{\Sigma}_V]_{\mathcal{K}^c}\right) \operatorname{det}\left([\bm{\Sigma}_V]_{\mathcal{K}}\right)}\right) \leq \sum_{i\in\mathcal{K}}r_i,\forall \mathcal{K} \subset[N_A], \mathcal{K} \neq \varnothing.\label{optivcon2}
	\end{align}
\end{subequations}

We solve the optimization problem in (\ref{prob:sigma_v}) through an iterative algorithm based on majorization-minimization (MM). The algorithm starts with a feasible point called the \textit{current-point}. Each iteration round consists of two steps. In the first step, we construct a \textit{surrogate} problem, whose objective serves as a lower bound of the original objective with equality holds at the current-point. Besides, the feasible region of the surrogate problem should be a subset of the original feasible region and contains the current-point. In the second step, we solve the surrogate problem, and the solution will be used as the current-point in the next iteration. Following \cite{yang_federated_2022}, given a feasible point $\bm{\Sigma}_V=\widehat{\bm{\Sigma}}_V$, we provide a convex problem as the surrogate problem below
\begin{subequations}\label{prob:surrogate}
	\begin{align}
		\underset{\{\boldsymbol{\Sigma}_V\}}{\textup{maximize}}\quad & 2 \mathbf{c}^\mathsf{T} \boldsymbol{\Sigma}_S \mathbf{b}-\mathbf{b}^\mathsf{T}\left(\boldsymbol{\Sigma}_S+\boldsymbol{\Sigma}_V\right) \mathbf{b} \\
		\text { s.t. } \quad& \chi_{\mathcal{K}}\left(\mathbf{E}_{\mathcal{K}}, \mathbf{F}_{\mathcal{K}}, \boldsymbol{\Sigma}_V\right)+\xi_{\mathcal{K}}\left(\mathbf{E}_{\mathcal{K}}, \mathbf{F}_{\mathcal{K}}\right) \leq \sum_{i \in \mathcal{K}} r_i, \forall \mathcal{K} \subset[N_A], \mathcal{K} \neq \varnothing, \\
		& \chi_{[N_A]}\left(\bm{G}, \boldsymbol{\Sigma}_V\right)+\xi_{[N_A]}(\bm{G}) \leq \sum_{i \in[N_A]} r_i,
	\end{align}
\end{subequations}
where
\begin{subequations}
	\begin{align}
		\chi_{\mathcal{K}}\left(\mathbf{E}_{\mathcal{K}}, \mathbf{F}_{\mathcal{K}}, \boldsymbol{\Sigma}_V\right)=&\ \frac{\log (e)}{2} \!\operatorname{tr}\!\left\{\mathbf{F}_{\mathcal{K}}^{-1} [\boldsymbol{\Sigma}_V]_{\mathcal{K}}\right\}\!+\!\frac{\log (e)}{2}\! \operatorname{tr}\left\{\!\mathbf{E}^\mathsf{T}_{\mathcal{K}} \mathbf{F}_{\mathcal{K}}^{-1} \mathbf{E}_{\mathcal{K}} [\boldsymbol{\Sigma}_V]_{\mathcal{K}^c}\!\right\}\!-\!\frac{1}{2}\! \log \left(\operatorname{det}\left([\boldsymbol{\Sigma}_V]_{\mathcal{K}}\right)\right),
	\end{align}
	\begin{align}	
		\nonumber\xi_{\mathcal{K}}(\mathbf{E}_{\mathcal{K}}, \mathbf{F}_{\mathcal{K}})=&\  \frac{\log (e)}{2} \operatorname{tr}\left\{\mathbf{F}_{\mathcal{K}}^{-1}\left([\boldsymbol{\Sigma}_S]_{\mathcal{K}}+\mathbf{E}_{\mathcal{K}} [\boldsymbol{\Sigma}_S]_{\mathcal{K}^c} \mathbf{E}^\mathsf{T}_{\mathcal{K}}-\mathbf{E}_{\mathcal{K}}[ \boldsymbol{\Sigma}_S]_{\mathcal{K}^c, \mathcal{K}}-[\boldsymbol{\Sigma}_S]_{\mathcal{K}, \mathcal{K}^c} \mathbf{E}^\mathsf{T}_\mathcal{K}\right)\right\}\\
		&+\frac{1}{2} \log (\operatorname{det}(\mathbf{F}_{\mathcal{K}}))-\frac{|\mathcal{K}| \log (e)}{2},
	\end{align}
	\begin{equation}
		\chi_{[N_A]}\left(\mathbf{G}, \boldsymbol{\Sigma}_V\right)=\frac{\log (e)}{2} \operatorname{tr}\left\{\mathbf{G}^{-1} \boldsymbol{\Sigma}_V\right\}-\frac{1}{2} \log \left(\operatorname{det}\left(\boldsymbol{\Sigma}_V\right)\right), 
	\end{equation}
	\begin{equation}
		\xi_{[N_A]}(\mathbf{G})=\frac{1}{2} \log (\operatorname{det}(\mathbf{G}))+\frac{\log (e)}{2} \operatorname{tr}\left\{\mathbf{G}^{-1} \boldsymbol{\Sigma}_X\right\}-\frac{N_A \log (e)}{2},
	\end{equation}
\end{subequations}
and $\mathbf{b}=(\boldsymbol{\Sigma}_S+\widehat{\bm{\Sigma}}_V)^{-1} \boldsymbol{\Sigma}_S \mathbf{c}$, $\bm{G}=\bm{\Sigma}_S+\bm{\widehat{\Sigma}}_V$, $\mathbf{E}_{\mathcal{K}}=[\bm{\Sigma_S}]_{\mathcal{K}, \mathcal{K}^c}([\bm{\Sigma}_S+\widehat{\bm{\Sigma}}_V]_{\mathcal{K}^c})^{-1}$, and $\mathbf{F}_{\mathcal{K}}=[\bm{\Sigma}_S+\bm{\widehat{\Sigma}}_V]_{\mathcal{K}}-[\bm{\Sigma_S}]_{\mathcal{K}, \mathcal{K}^c}([\bm{\Sigma}_S+\widehat{\bm{\Sigma}}_V]_{\mathcal{K}^c})^{-1} [\bm{\Sigma_S}]_{\mathcal{K}^c, \mathcal{K}}$ for all nonempty set $\mathcal{K}\subset{[N_A]}$.  Since the surrogate problem (\ref{prob:surrogate}) is convex and is solved optimally with existing convex optimization solvers such as CVXPY \cite{diamond_cvxpy_2016}. By repeatedly constructing and solving this surrogate problem following the MM framework introduced before, we finally obtain a suboptimal solution to the original problem (\ref{prob:sigma_v}). The overall optimization process is outlined in Algorithm \ref{algo:optimization}. 

\begin{algorithm}
 	\caption{The Overall Optimization Process.}
 		\begin{algorithmic}[1]\label{algo:optimization} 
 			\STATE Each device $k$ computes $v_k, P_k$ and sends them to AP $i$, if $k\in\mathcal{N}_{D,i}$
 			
 			\STATE Each AP $i$ receives $\{v_k,  P_k\}_{k\in\mathcal{N}_{D,i}}$ from the devices and sends them to the CS
 			\STATE Each AP $i$ estimates $\{\bm{H}_{ik}\}_{k\in\mathcal{N}_{D,i}}$, $\varepsilon_i$ and sends them to the CS
 			\STATE The CS receives $\{\{\bm{H}_{ik}\}_{k\in\mathcal{N}_{D,i}}, \varepsilon_i,\{v_k,  P_k\}_{k\in\mathcal{N}_{D,i}}\}_{i=1}^{N_A}$ from the APs
 			\STATE The CS estimates $\bm{\Sigma}_S$
 			
 			\REPEAT
 			\STATE The CS optimizes $\{\bm{\alpha}_k\}_{k=1}^{N_D}$ via solving the problem in (\ref{problem r0})
 			
 			\STATE The CS optimizes $\{\bm{\beta}_i\}_{i=1}^{N_A}$ via (\ref{problem p4})

 			\STATE The CS optimizes $\bm{\Sigma}_V$ via iteratively solving the surrogate problem in (\ref{prob:sigma_v})

 			\STATE The CS optimizes $\bm{c}$ via (\ref{problem r1})
 			\UNTIL {convergence}
 			\STATE The CS sends $\{\bm{\alpha}_k\}_{k\in\mathcal{N}_{D,i}}, \bm{\beta}_i$ to each AP $i$
 			\STATE Each AP $i$ sends $\bm{\alpha}_k$ to the device $k$, if $k\in\mathcal{N}_{D,i}$
 		\end{algorithmic} 	
 \end{algorithm}

\section{Practical Design}\label{coder}
\begin{figure}
	\centering
	\includegraphics[width=0.95\linewidth]{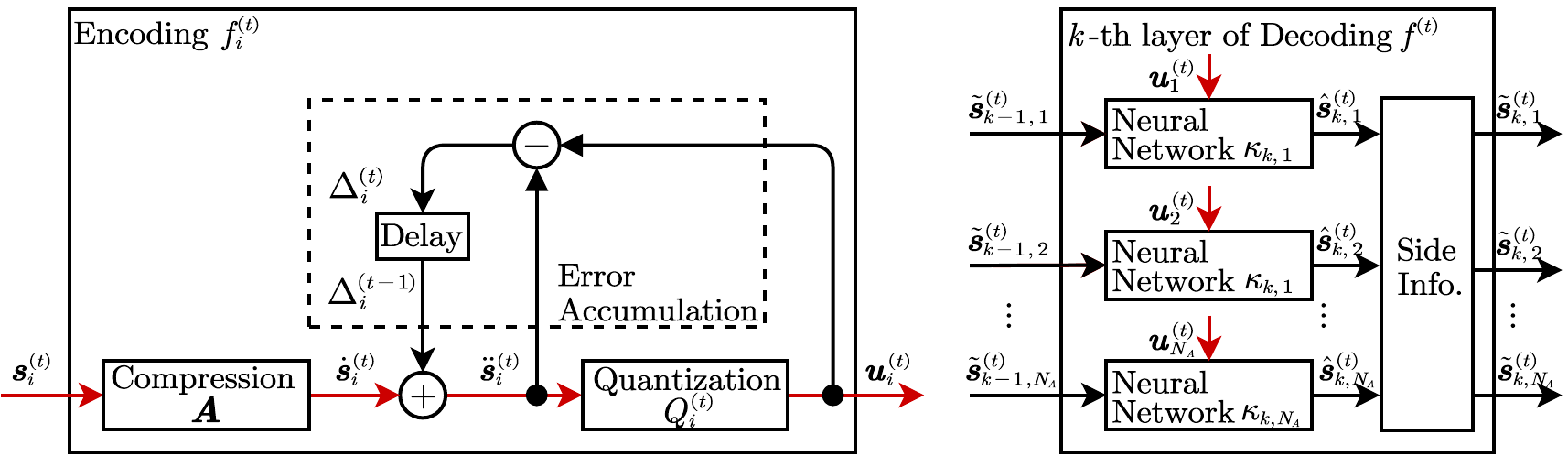}
	\caption{An illustration of the encoding function $f_i^{(t)},\forall i$ and layer $k$ on decoding function $f^{(t)}$.}
	\label{fig:coder}
\end{figure}
In this section, we provide a practical design of encoding functions $\{f_i^{(t)}\}_{i=1}^{N_A}$ and joint decoding function $f^{(t)}$ . In this design, we consider a quadratic distortion measure $d(\mathbf{a}^{(t)}, \mathbf{b}^{(t)})=\|\mathbf{a}^{(t)}-\mathbf{b}^{(t)}\|^2$ and the linear aggregation target $\kappa\left(\bm{a}_1^{(t)}, \ldots, \bm{a}_{N_A}^{(t)}\right)=\sum_{i=1}^{N_A} c_i^{(t)} \bm{a}_i^{(t)}$.
\subsection{Encoding Function Design}
 As shown in Fig. \ref{fig:coder}, we propose an encoding function $f_i^{(t)},\forall i$ that comprises a compression module, quantization mudule and error accumulation module. Specifically, at the $t$-th round, the source vector $\bm{s}_i^{(t)}$ is firstly compressed into a low-dimensional vector as
\begin{equation}\label{Nencoder_co}
	\dot{\bm{s}}_i^{(t)} = \bm{As}_i^{(t)}\in\mathbb{R}^{N\sigma}, \forall i\in[N_A],
\end{equation}
where $\bm{A}\in\mathbb{R}^{N\sigma\times N}$ is a random compression matrix with a compression ratio of $\sigma\in(0,1]$. To reduce quantization error from the previous round, $\dot{\bm{s}}_i^{(t)}$ is added with an error accumulation term \cite{seide_1-bit_2014} as
\begin{equation}\label{Nencoder_ac}
	\ddot{\bm{s}}_i^{(t)} = \dot{\bm{s}}_i^{(t)} + \bm{\Delta}_i^{(t-1)} \in\mathbb{R}^{N\sigma}, \forall i\in[N_A],
\end{equation}
where the error accumulation term $\bm{\Delta}_i^{(t-1)}\in\mathbb{R}^{N\sigma}$ is from the round $t-1$. To meet  the rate-constraint in Definition \ref{defi1} and (\ref{bit con}), i.e., $\frac{1}{N} \log \left(B_i^{(t)}\right) \leq r_i^{(t)}+\epsilon, \forall i \in[N_A]$, the resulting vector $\ddot{\bm{s}}_i^{(t)}$ is next quantized to
\begin{equation}\label{Nencoder_en}
	\bm{u}_i^{(t)} = \mathcal{Q}_i^{(t)}(\ddot{\bm{s}}_i^{(t)}, r_i^{(t)})\in[B_i^{(t)}], \forall i\in[N_A],
\end{equation}
where $B_i^{(t)}$ given in (\ref{encoder}) is the codebook size of encoding function $f_i^{(t)}$, $\mathcal{Q}_i^{(t)}$ is an A-law quantizer that discretizes each element of $\ddot{\bm{s}}_i^{(t)}$ into a quantized number, resulting in vector $\bm{u}_i^{(t)}$. The error accumulation vector $\bm{\Delta}_i^{(t)}$ is calculated as
\begin{equation}\label{Nencoder_de}
	\bm{\Delta}_i^{(t)} = \ddot{\bm{s}}_i^{(t)}-\bm{u}_i^{(t)}\in\mathbb{R}^{N\sigma}, \forall i\in[N_A].
\end{equation}
Finally, each AP $i$ sends the quantized vector $\bm{u}_i^{(t)}$ to the CS through the uplink fronthaul network.

\begin{algorithm}
	\caption{Practical Design of Encoding and Decoding Functions.}
	\begin{algorithmic}[1]\label{algo4} 

		\STATE \textbf{Each encoding function} $f_i^{(t)}$ \textbf{in} (\ref{encoder}) \textbf{does:}

		\STATE  Compute $\bm{u}_i^{(t)}$ via (\ref{Nencoder_co})-(\ref{Nencoder_en})
		\STATE  Compute $\bm{\Delta}_i^{(t)}$ via (\ref{Nencoder_de})
		\algrule
		\STATE \textbf{The decoding function} $f^{(t)}$ \textbf{in} (\ref{decoder}) \textbf{does:}
		\STATE  Initialize $\tilde{\bm{s}}_{0,i}^{(t)}=\bm{0},\forall i$
		\STATE  \textbf{for} {$k=1,\dots,K:$}	
		\STATE   \quad Compute $\{\hat{\bm{s}}_{k,i}^{(t)}, \tilde{\bm{s}}_{k,i}^{(t)}\}_{i=1}^{N_A}$ via (\ref{decoder:straight})-(\ref{decoder:side})
		\STATE   \textbf{end}
		\STATE   Compute $\hat{\bm{z}}^{(t)}$ via (\ref{z hat})
		\algrule
		\STATE \textbf{The training process w.r.t. $\bm{w}^{(t)}$:}
		\STATE  Initialize $\tilde{\bm{s}}_{0,i}^{(t)}=\bm{0},\forall i,t$
		
		\STATE \textbf{repeat}

		\STATE \quad Sample data $\bm{s}_1^{(t)},\dots,\bm{s}_{N_A}^{(t)}$ from a memoryless Gaussian source $\mathcal{N}(\bm{0},\bm{\Sigma}_S^{(t)})$

		\STATE \quad Compute the loss in (\ref{decoder train})
		\STATE \quad Update $\bm{w}^{(t)}$ through backpropagation
		\STATE  \textbf{until} convergence
	\end{algorithmic} 	
	
\end{algorithm}

\subsection{Decoding Function Design}
As shown in Fig. \ref{fig:coder}, we propose a decoding function $f^{(t)}$ with $K$ layers, each consisting of $N_A$ neural networks and a side information module. At layer $k$, each neural network $\kappa_{k,i}^{(t)}(\cdot), \forall i$ computes
\begin{equation}\label{decoder:straight}
	\hat{\bm{s}}_{k,i}^{(t)} = \kappa_{k,i}^{(t)}(\bm{u}_i^{(t)}, \tilde{\bm{s}}_{k-1,i}^{(t)})\in\mathbb{R}^{N}, \forall i\in[N_A],
\end{equation}
where $\bm{u}_i^{(t)}$ is the encoded vector from AP $i$, and $\tilde{\bm{s}}_{i,k-1}^{(t)}\in\mathbb{R}^{N},\forall i$ is a side information module from layer $k-1$, with $\tilde{\bm{s}}_{i,0}^{(t)}$ initialized to $\bm{0}$. Then the side information module at layer $k$ collects $\{\hat{\bm{s}}_{k,i}^{(t)}\}_{i=1}^{N_A}$ to generate the side information vector as
\begin{equation}\label{decoder:side}
	\tilde{\bm{s}}_{k,i}^{(t)}=\sum_{j\in[N_A],j\ne i}\hat{\bm{s}}_{k,j}^{(t)}\in\mathbb{R}^{N}, \forall i\in[N_A].
\end{equation}
Considered a linear aggregation target, i.e., $\kappa\left(\bm{a}_1^{(t)}, \ldots, \bm{a}_{N_A}^{(t)}\right)=\sum_{i=1}^{N_A} c_i^{(t)} \bm{a}_i^{(t)}$, the encoding function $f^{(t)}$ constructs the estimation $\hat{\bm{z}}^{(t)}$ as
\begin{equation}\label{z hat}
	\hat{\bm{z}}^{(t)}=\sum_{i=1}^{N_A} c_i^{(t)}\bm{s}_{K,i}^{(t)},
\end{equation}
where $\bm{s}_{K,i}^{(t)}$ is the vector from  layer $K$. To train the neural networks in $f^{(t)}$ at each round $t$, based on the quadratic distortion measure and the linear aggregation target, we design an empirical objective as
\begin{equation}\label{decoder train}
	\underset{\bm{w}^{(t)}}{\textup{maximize}}\left\|\bm{z}^{(t)}-\hat{\bm{z}}^{(t)}\right\|^2\triangleq\left\|\sum_{i\in[N_A]}c_i^{(t)}\bm{s}_i^{(t)}-f^{(t)}(\bm{u}_1^{(t)},\dots,\bm{u}_{N_A}^{(t)},\bm{w}^{(t)})\right\|^2,
\end{equation}
where $\bm{u}_i^{(t)}=f_i^{(t)}(\bm{s}_i^{(t)})$, $\bm{w}\in\mathbb{R}^M$ is the network parameter of decoding $f^{(t)}$ with dimension $M$, and $s_{1,n}^{(t)}$, $\dots$, $s_{N_A,n}^{(t)}, \forall n$ are sampled from a memoryless Gaussian source $\mathcal{N}(\bm{0},\bm{\Sigma}_S^{(t)})$. The minimization of (\ref{decoder train}) is through the gradient descent (GD) w.r.t. $\bm{w}$. The overall deign of encoding functions and decoding function is summarized in Algorithm \ref{algo4}.

\subsection{Optimization in Practical Design}
In practical encoding and decoding function designs, $\bm{\Sigma}_V^{(t)}$ is not an optimization variable. Therefore, the optimization of practical design is similar to the problem in (\ref{opti:p1}) but does treat $\bm{\Sigma}_V^{(t)}$ as a known constant. To estimate $\bm{\Sigma}_V^{(t)}$, we follow \cite{yang_federated_2022}, to set the compression ratio $\sigma=1$ for each AP and define the $i$-th diagonal element of $\bm{\Sigma}_V^{(t)}$ as $\frac{1}{N}(\bm{s}_i^{(t)}-\bm{u}_i^{(t)})^\mathsf{T}(\bm{s}_j^{(t)}-\bm{u}_j^{(t)})$. In practice, we require each AP to send a sub-vector of $\bm{s}_i-\bm{u}_i^{(t)}$ to the CS intermittently for estimation during training.

\section{Numerical Results}\label{results}

\subsection{Experimental Settings}
We validate our proposed OA-FL in MIMO Cloud-RAN framework with experiments, and provide some schemes for comparison:
\begin{itemize}
	\item Error-free bound: This bound assumes that the CS receives all the local updates of devices in an error-free fashion and updates the global model by (\ref{model updating0}).
	
	\item L-DSC bound: In our proposed OA-FL in MIMO Cloud-RAN framework, this bound considers L-DSC encoding at each AP and L-DSC decoding at the CS \cite{yang_federated_2022}. We emphasize that the L-DSC encoding and decoding serve as a benchmark for theoretical performance limit.
	
	\item Proposed practical design: In our proposed OA-FL in MIMO Cloud-RAN framework, this scheme uses the practical design proposed in Section \ref{coder} as the encoding and decoding functions.

	\item Quantization: In our proposed OA-FL in MIMO Cloud-RAN framework, this scheme is applied as follows: each AP $i$ computes the encoded vector $\bm{u}_i^{(t)}$ as $\bm{u}_i^{(t)}=\mathcal{Q}_i^{(t)}(\bm{s}_i^{(t)},r_i^{(t)}),\forall t,i$, and the CS computes the global update $\hat{\bm{z}}^{(t)}$ as $\hat{\bm{z}}^{(t)}=\sum_{i=1}^{N_A}c_i^{(t)}\bm{u}_i^{(t)}\in \mathbb{R}^N,\forall t$.

	\item Distributed deterministic information bottleneck (DDIB): In our proposed OA-FL in MIMO Cloud-RAN framework, this scheme considers DDIB encoding at each AP and DDIB decoding at the CS, both of which are based on deep neural networks (DNN). 
\end{itemize}
To make performance comparisons, we conduct federated image classification experiments on three datasets: the MNIST dataset of handwritten digits \cite{yann_mnist_1998}, the Fashion-MNIST dataset of fashion clothing \cite{xiao_fashion-mnist_2017} and the CIFAR-10 dataset. We train a neural network on each device and the CS with two $5\times5$ convolution layers (the first with $10$ channels, the second with $20$, each followed by a $2\times2$ max pooling operation), a fully connected layer with 50 units and ReLU activation, and a final softmax output layer (model parameter length $N = 21840$). Each device performs 5 stochastic gradient descent updates with a learning rate of $0.01$ and a local batch size of 1200. The CS updates the global model using a learning rate of $1.5/(1+t/10)$, where $t$ is the communication round. The channel gain is modeled according to \cite{goldsmith_wireless_2005}, where the channel gain is given by $\mathbf{H}_{ik}^{(t)}=\sqrt{G_{\mathrm{R},i} G_{\mathrm{T},k} \nu \delta_{ik}^{-\alpha}} \tilde{\mathbf{H}}_{ik}^{(t)}$. Here, the entries of $\tilde{\mathbf{H}}_{ik}^{(t)}$ are modeled as i.i.d. circularly symmetric complex Gaussian random variables with zero-mean and unit-variance, $G_{\mathrm{R},i}$ and $G_{\mathrm{T},k}$ are the antenna gains at AP $i$ and device $k$, respectively, $\alpha$ is the path loss exponent, $\delta_{ik}$ is the distance between device $k$ and AP $i$, and $\nu$ is the path loss at a reference distance of 1 m \cite{wu_intelligent_2019}. We summarize the simulation settings in Tab. \ref{table:parameter}.

\begin{table}
	\centering
	\caption{Simulation Settings}
	\scriptsize
	\begin{tabular}{|l|l||l|l||l|l||l|l||l|l|}
		\hline Parameter & Value & Parameter & Value & Parameter & Value & Parameter & Value &Parameter & Value \\
		\hline$N_{\mathrm{T}}$ & 3 & $N_{\mathrm{R}}$ & 8 & $\alpha$ & 3.8 & $\nu$ & -60 dB&$\delta_{ik}$&30 m\\
		\hline$G_{\mathrm{R},i}$ & 10 dBi & $G_{\mathrm{T},k}$ & 5 dBi & $P_k$ & 1 W & $\sigma_z^2$ & -80 dBm&$\sigma$&0.5 \\
		\hline
	\end{tabular}
		\normalsize
	\label{table:parameter}
\end{table}

\subsection{Comparisons of the Proposed Algorithms Under Various Settings}
To verify the effectiveness of side information module in the proposed practical design, we present a comparison method called "Proposed practical design (zero side information)", where the side information vector $\tilde{\bm{s}}_{k,i}^{(t)},\forall k,i,t$ is set to $\bm{0}$. Fig. \ref{fig:data1} shows the performance of our proposed practical design in utilizing the inter-AP correlation. The left of Fig. \ref{fig:data1} shows the training loss of the proposed practical design, i.e., the objective in (\ref{decoder train}). The middle of Fig. \ref{fig:data1} shows the system mean square error (MSE), i.e. $\|\bm{e}\|^2$ versus round $t$. We observe that the proposed practical design achieves lower training loss and MSE than the proposed practical design (zero side information), thanks to the ability of the side information module to leverage inter-AP correlation. Moreover, as we increased the number of layers $K$, we see a decrease in both the training loss and system MSE. This improvement comes at the expense of increased computational complexity and latency, which is a reasonable trade-off. The right of Fig. \ref{fig:data1} shows the test accuracy on MNIST dataset versus round $t$. We see that by utilizing the side information module, the test accuracy of the proposed practical design was close to the L-DSC bound and the error-free bound, with a deviation of less than 1\% at convergence. Additionally, the proposed practical design significantly outperformed the proposed practical design (zero side information), which further demonstrates the advantage of leveraging the inter-AP correlation.
\begin{figure}
	\centering
	\includegraphics[width=0.8\linewidth]{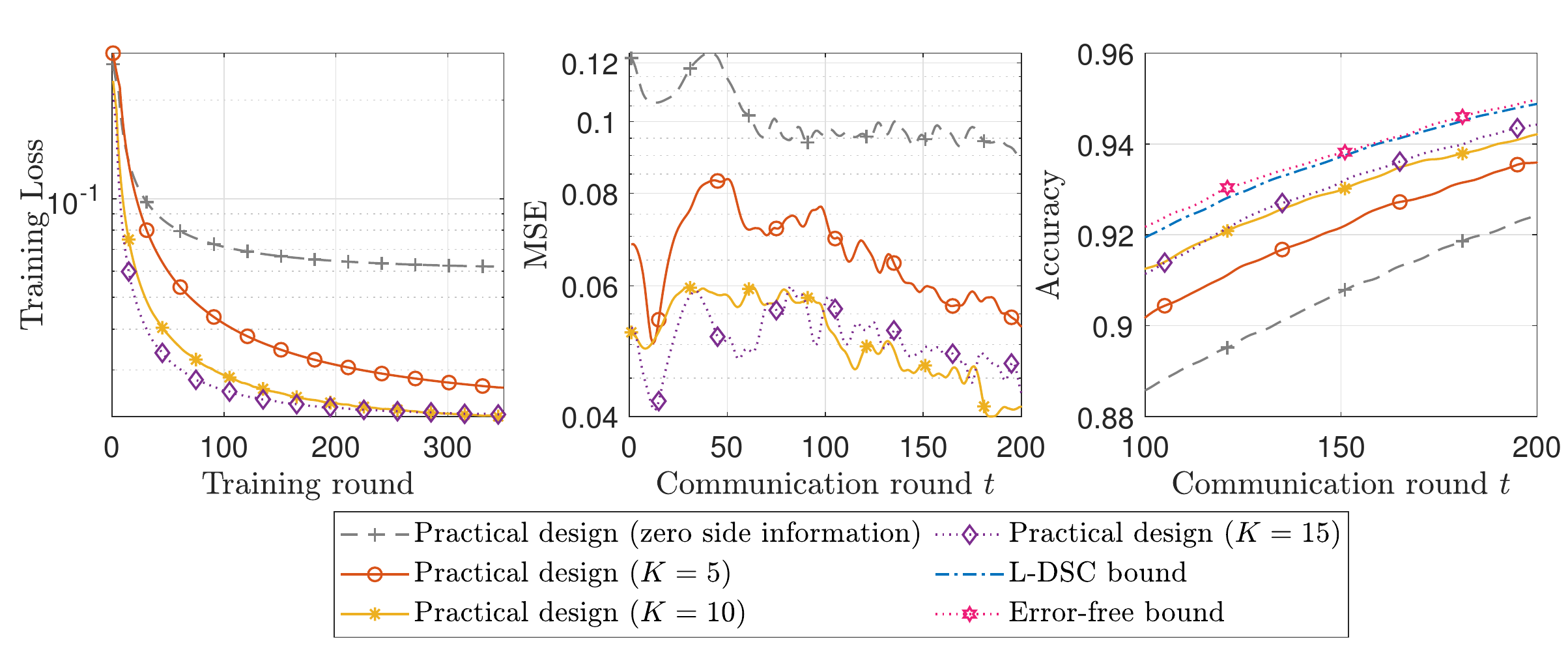}
	\caption{\textbf{Left}: Training loss of the proposed practical design, i.e., the objective in (\ref{decoder train}). \textbf{Middle}: The system mean square error on MNIST dataset, i.e. $\|\bm{e}\|^2$ versus round $t$. \textbf{Right}: The test accuracy on MNIST dataset versus round $t$. Other setting: $N_A=3, \mathcal{N}_{D,1}={1,\dots,5}, \mathcal{N}_{D,2}={6,\dots,13}, \mathcal{N}_{D,3}={14,\dots,20}$, $r_i^{(t)}=1 \text{ bit/symbol}, \forall i,t, K=10$.}
	\label{fig:data1}
\end{figure}

\begin{figure}
	\centering
	\includegraphics[width=0.8\linewidth]{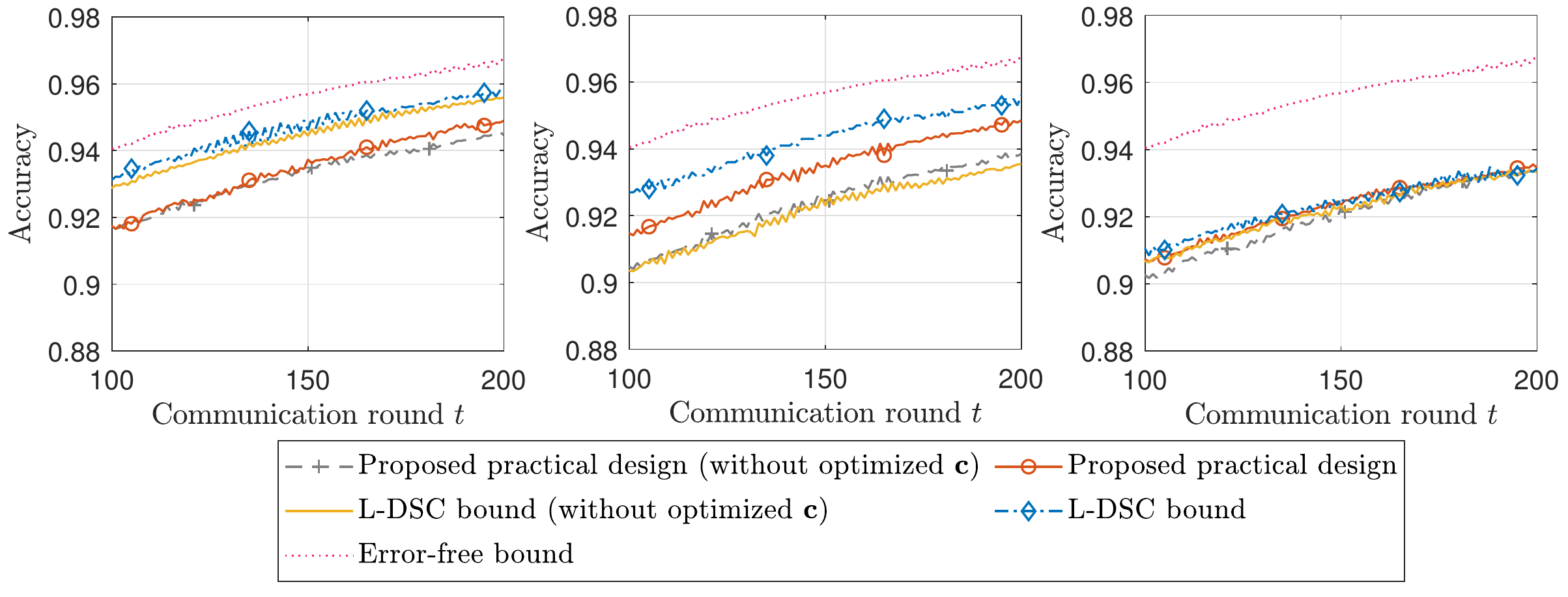}
	\caption{The curves show the test accuracy and mean-square error (MSE) on the MNIST dataset versus training round $t$. The settings are as follows: $N_A=3$, $\mathcal{N}_{D,1}={1,\dots,5}$, $\mathcal{N}_{D,2}={6,\dots,13}$, and $\mathcal{N}_{D,3}={14,\dots,20}$. \textbf{Left}: $P_k^{(t)}=1$W, $\forall k,t$, $r_i^{(t)}=10 \text{ bit/symbol},\forall i,t$. \textbf{Middle}: $P_k^{(t)}=0.5$W, $\forall k\in\mathcal{N}_{D,1}$, $P_k^{(t)}=1$W $\forall k\in\mathcal{N}_{D,2}\cup\mathcal{N}_{D,3}$. $r_1^{(t)}=2 \text{ bit/symbol}, r_2^{(t)}=r_3^{(t)}=10 \text{ bit/symbol}$. \textbf{Right}: $P_k^{(t)}=0.5$W, $\forall k,t$, $r_i^{(t)}=2 \text{ bit/symbol},\forall i,t, K=10$.}
	\label{fig:data2}
\end{figure}

To demonstrate the effectiveness of the optimization process in Algorithm \ref{algo:optimization}, we designed three scenarios, which are depicted in the left, middle, and right sections of Fig. \ref{fig:data2}. In the first scenario, all devices have the same transmission power, and all APs have the same information rate limitation, e.g., $P_k^{(t)}=1W, \forall k,t, \tilde{r}_i^{(t)}=10 \text{ bit/symbol}, \forall i,t$. In the second scenario, devices have varying transmission power and APs have differing information rate limitations, e.g., $P_k^{(t)}=0.5$W, $\forall k\in\mathcal{N}_{D,1}$, $P_k^{(t)}=1$W $\forall k\in\mathcal{N}_{D,2}\cup\mathcal{N}_{D,3}$. $r_1^{(t)}=2 \text{ bit/symbol}, r_2^{(t)}=r_3^{(t)}=10 \text{ bit/symbol}$. In the third scenario, we follow a similar settings as in the first scenario, but with reduced transmission power for all devices and lowered information rate limitation for all APs, e.g., $P_k^{(t)}=0.5$W, $\forall k,t$, $r_i^{(t)}=2 \text{ bit/symbol},\forall i,t$. We also included two comparison methods, "Proposed practical design (without optimized $\bm{c}$)" and "L-DSC bound (without optimized $\bm{c}$)". In these methods, the weighted parameter is fixed at $\bm{c}=\bm{1}$. Fig. \ref{fig:data2} shows that optimizing $\bm{c}$ results in a small improvement in performance in the left and right plots, but a significant improvement in the middle plot. Specifically, in the middle plot, the L-DSC bound achieves an accuracy improvement of approximately 0.017 at the convergence point, while the proposed practical design achieves an improvement of approximately 0.011. These results suggest that optimizing $\bm{c}$ enhance the learning performance when the communication quality across different APs is unbalanced, as in scenario (2). By examining the optimization problem of $\bm{c}$ in (\ref{problem r1}), we see that the optimization of $\bm{c}$ is actually a resource allocator that measures the wired communication quality of different APs to the CS and the wireless communication quality of different APs to their respective served devices. Additionally, we noted that in scenarios where the channel environment is degraded, i.e., the right plot, the test accuracy of the proposed practical design is comparable to the L-DSC bound.

\begin{figure}
	\centering
	\includegraphics[width=0.8\linewidth]{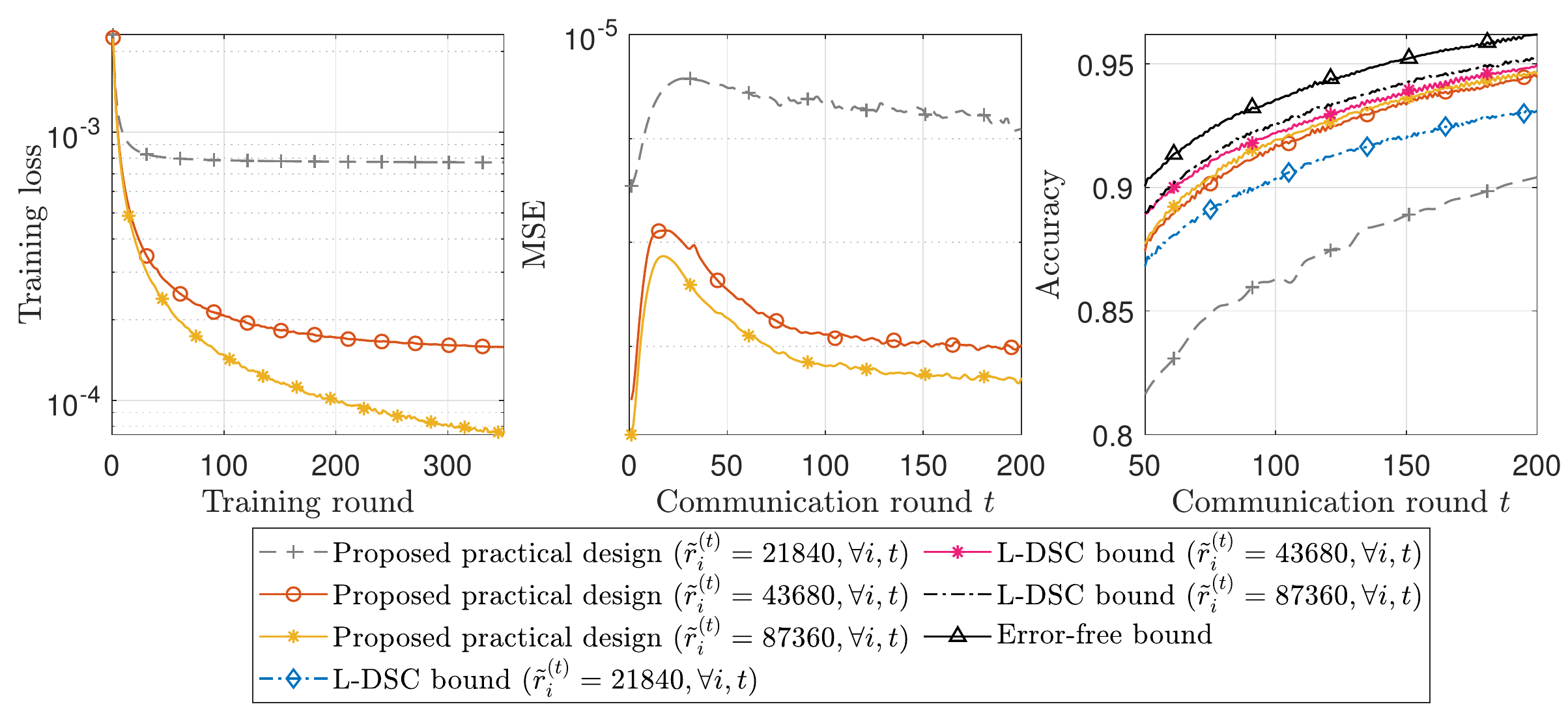}
	\caption{\textbf{Left}: Training loss of the proposed practical design, i.e., the objective in (\ref{decoder train}). \textbf{Middle}: The system mean square error on MNIST dataset, i.e. $\|\bm{e}\|^2$ versus training round $t$. \textbf{Right}: The test accuracy on MNIST dataset versus training round $t$. Other setting: $N_A=3,\mathcal{N}_{D,1}={1,\dots,5},\mathcal{N}_{D,2}={6,\dots,13},\mathcal{N}_{D,3}={14,\dots,20}, K=10$.}
	\label{fig:data3}
\end{figure}

To demonstrate the performance of the proposed practical design in various rate constraints, we provide Fig. \ref{fig:data3}. In this figure, we present three different rate constraint settings, as denoted by $\tilde{r}_i^{(t)}$, which determine the maximum number of bits that AP $i$ can transmit to the CS during each training round $t$. The left and middle of Fig. \ref{fig:data1} show the training loss and MSE of proposed practical design under various rate constraints. We see that the training loss and MSE of the proposed practical design exhibit a significant increase when the rate constraint, i.e., $\tilde{r}_i^{(t)}$, is close to the data dimension $N$, compared to when it is close to $2N$ or $4N$. The right of Fig. \ref{fig:data3} compares the test accuracy of our practical design with the L-DSC bound and the error-free bound. At convergence, the accuracy of our practical design is within 0.01 of the L-DSC bound when $\tilde{r}_i^{(t)}$ is equal to $4N$ or $2N$, and within 0.025 when $\tilde{r}_i^{(t)}$ is equal to $N$.

\subsection{Comparisons With Existing Schemes}

\begin{figure}
	\centering
	\includegraphics[width=0.8\linewidth]{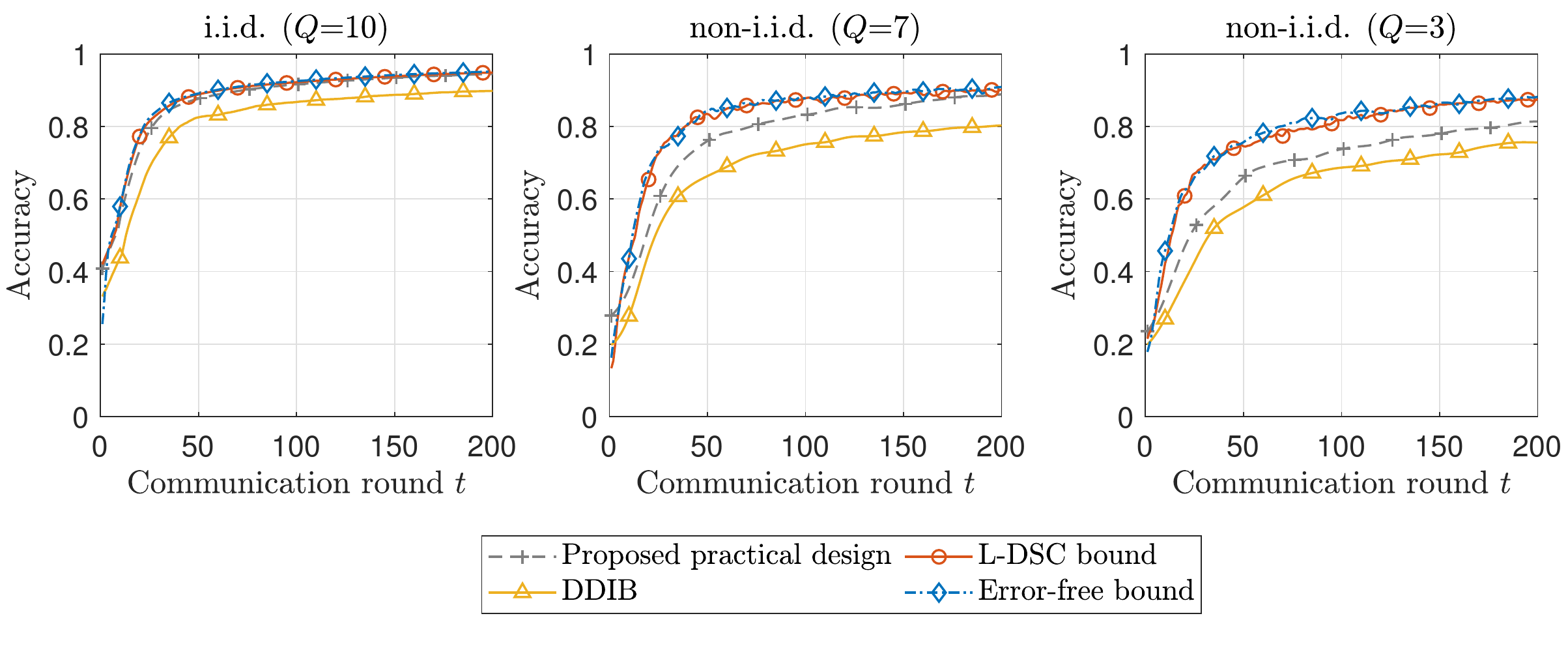}
	\caption{Test accuracy on the MNIST task, where $Q$ measures the heterogeneous setting to a certain extent. \textbf{Left}: Independent and identically distributed (i.i.d.) dataset with $Q=10$. \textbf{Middle}: non-i.i.d. dataset with $Q=7$. \textbf{Right}: non-i.i.d. dataset with $Q=3$. Other setting: $N_A=3,\mathcal{N}_{D,1}={1,\dots,5},\mathcal{N}_{D,2}={6,\dots,13},\mathcal{N}_{D,3}={14,\dots,20}, K=10$.}
	\label{fig:data6}
\end{figure}

We empirically validate the heterogeneous setting of federated datasets in our framework using the MNIST task. Specifically, each device randomly selects $Q$ classes from the dataset $\{\mathcal{B}_k\}_{k=1}^{N_D}$, and subsequently samples are randomly drawn without repetition from these selected classes. When $Q$ is set to 10, the local datasets of the devices are independently and identically distributed (i.i.d.). Fig. \ref{fig:data6} compares the test accuracy between proposed practical design and other baseline schemes under different heterogeneous settings. We observe that as the value of $Q$ decreases, the learning performance of these schemes also decreases. We also see that the proposed practical design achieves a learning performance comparable to the L-DSC bound at the convergence point when $Q$ is set to either 10 or 7. Furthermore, the proposed practical design outperforms the quantization scheme and the DDIB scheme across all heterogeneous settings.

\section{Conclusions}
In this paper, we proposed the OA-FL in MIMO Cloud-RAN framework to address the issues of limited server coverage and low resource utilization in traditional OA-FL. We also noted that inter-AP correlation can be leveraged to improve learning performance and therefore proposed modeling the global aggregation stage as an L-DSC problem to make analysis from the perspective of rate-distortion theory. We further analyzed the performance of the proposed OA-FL in MIMO Cloud-RAN framework. Based on this analysis, we formulated a communication-learning optimization problem to improve the system performance by considering the inter-AP correlation. We showed that, under mild conditions, the problem can be solved using alternating optimization (AO) and majorization-minimization (MM). Subsequently, we proposed a practical design that demonstrates the utilization of inter-AP correlation. The numerical results show that the proposed practical design effectively leverages inter-AP correlation, and outperforms other baseline schemes.

\appendices
\section{System Error Bound}\label{app0}
To start with, we bound $\mathbb{E}[||\bm{e}^{(t)}||^2]$ as
\begin{align}\label{expectation error}
	\mathbb{E}[||\bm{e}^{(t)}||^2] \leq 2(\mathbb{E}[||\bm{e}_{1}^{(t)}||^2]+\mathbb{E}[||\bm{e}_{2}^{(t)}||^2]),
\end{align}
by using the triangle inequality and the inequality of arithmetic means. With (\ref{device:complex}), (\ref{acc:receive})-(\ref{acc:process}), (\ref{error analysis}), we have
\begin{align}
	\mathbb{E}[\|\bm{e}_1^{(t)}\|^2]
	= \mathbb{E}\left[\left\|\sum_{i\in[N_A]}\sum_{k\in\mathcal{N}_{D,i}}\left(\sqrt{v_k^{(t)}}-c^{(t)}_i{\bm{\alpha}_k^{(t)}}^\mathsf{T}{\bm{H}_{ik}^{(t)}}^\mathsf{T}{\bm{\beta}_i^{(t)}}^\mathsf{\dag}\right)\bm{r}_k^{(t)}-\sum_{i\in[N_A]}c^{(t)}_i{\bm{Z}_{i}^{(t)}}^{\mathsf{T}}{\bm{\beta}_i^{(t)}}^\dagger\right\|^2\right].
\end{align}	
Since the term $\bm{Z}_i,\forall i\in[N_A]$ is an independent Gaussian random matrix, we have 
\begin{align}
	\mathbb{E}[\|\bm{e}_1^{(t)}\|^2]\leq&\  \mathbb{E}\left[\left\|\sum_{i\in[N_A]}\!\sum_{k\in\mathcal{N}_{D,i}}\!\left(\!\sqrt{v_k^{(t)}}\!-\!c^{(t)}_i\!{\bm{\alpha}_k^{(t)}}^\mathsf{T}\!{\bm{H}_{ik}^{(t)}}^\mathsf{T}\!{\bm{\beta}_i^{(t)}}^\mathsf{\dag}\right)\!\bm{r}_k^{(t)}\!\right\|^2\right]\!+\! \sum_{i\in[N_A]}\mathbb{E}\left[\left\|c^{(t)}_i{\bm{Z}_{i}^{(t)}}^{\mathsf{T}}{\bm{\beta}_i^{(t)}}^\dagger\right\|^2\right].
\end{align}	
From the assumption in Lemma \ref{all_bound}, we have
\begin{align}\label{error1}
	\mathbb{E}[||\boldsymbol{e}_1^{(t)}||^2]\leq &\  N\sum_{i_1\in[N_A]}\sum_{i_2\in[N_A]}[\bm{\Sigma_S}^{(t)}]_{i_1,i_2}\sum_{k\in\mathcal{N}_{D,i_1}}{\left(\sqrt{v_{k}^{(t)}}-c_{i_1}^{(t)}{\bm{\alpha}_{k}^{(t)}}^\mathsf{T}{\bm{H}_{i_1k}^{(t)}}^\mathsf{T}{\bm{\beta}_{i_1}^{(t)}}^\mathsf{\dag}\right)}^\mathsf{H}\\
	\nonumber&\times\sum_{k\in\mathcal{N}_{D,i_2}}\left(\sqrt{v_{k}^{(t)}}-c_{i_2}^{(t)}{\bm{\alpha}_{k}^{(t)}}^\mathsf{T}{\bm{H}_{i_2k}^{(t)}}^\mathsf{T}{\bm{\beta}_{i_2}^{(t)}}^\mathsf{\dag}\right) + N\sum_{i\in[N_A]}\varepsilon_i^{(t)} {c^{(t)}_i}^2\|\bm{\beta}_i^{(t)}\|^2,
\end{align}
where $\varepsilon_i\triangleq\frac{1}{N}\mathbb{E}[\|\bm{Z}_i^{(t)}\|^2]\in\mathbb{R}, \forall i \in [N_A]$. The closed-form expressions of the terms $\mathbb{E}[\|\bm{e}_1^{(t)}\|^2]=\frac{1}{N}\mathbb{E}[\|\bm{z}^{(t)}-\hat{\bm{z}}^{(t)}\|^2]$ is given in Lemma \ref{RDin}. From (\ref{expectation error}), (\ref{error1}) and Lemma \ref{RDin}, we have the closed-form bound of $\mathbb{E}[||\bm{e}^{(t)}||^2]$ in (\ref{error_term}), which completes the proof.



\bibliographystyle{IEEEtran}
\bibliography{output}

\end{document}